\documentclass[11pt,a4paper]{article}
\usepackage{amsmath,amssymb,amsthm}
\usepackage{mathtools}
\usepackage{geometry}
\usepackage{enumitem}
\usepackage{microtype}
\usepackage{comment}
\usepackage{cite}
\newtheorem{theorem}{Theorem}
\newtheorem{proposition}{Proposition}
\newtheorem{corollary}{Corollary}
\newtheorem{remark}{Remark}

\newcommand{\calC}{\mathcal{C}}
\newcommand{\calT}{\mathcal{T}}
\newcommand{\calX}{\mathcal{X}}
\newcommand{\calY}{\mathcal{Y}}
\newcommand{\calH}{\mathcal{H}}
\newcommand{\EFA}{E_{\mathrm{FA}}}
\newcommand{\EMD}{E_{\mathrm{MD}}}

\newcommand{\E}{\mathbb{E}}
\newcommand {\dfn} {\stackrel{\Delta} {=}}

\title{Soft Covering via Hypothesis Testing:\\
Typical-Code Exponents and Mismatched Detection}

\author{Neri~Merhav\\
The Viterbi Faculty of Electrical and Computer Engineering\\
Technion -- Israel Institute of Technology\\
Technion City, Haifa 3200003, Israel\\
E-mail: {\tt merhav@technion.ac.il}}

\date{\today}

\begin{document}
\maketitle
\thispagestyle{empty}

\begin{abstract}
We study the typical-code (quenched) behavior of the false-alarm (FA) and
missed-detection (MD) error exponents of the Neyman--Pearson
test associated with soft covering,
complementing the average-code (annealed) analysis that has been
carried out in a companion paper~\cite{MerhavCompanion}.
We prove that, as the block-length tends to infinity, for almost every
randomly selected fixed-composition codebook, the negative normalized
logarithms of both error probabilities
converge to their respective average-code exponents. In other words,
the error exponents are \emph{self-averaging}.
We then extend the scope and study a \emph{mismatched} likelihood ratio test
that assumes the wrong channel
model. Here, we derive the mismatched error exponents, show that self-averaging
persists under mismatch, and characterize the degradation.
In particular, we characterize the coding rate beyond which the two kinds of
error exponents cannot be both positive at the same time, which in the matched
case, is given by the channel input-output mutual information rate.

\noindent
{\bf Index Terms:}
soft covering, hypothesis testing, self-averaging,
error exponents, mismatched detection, almost-sure convergence.
\end{abstract}

\vspace{1.5\baselineskip}
\setlength{\baselineskip}{1.5\baselineskip}

\section{Introduction}
\label{sec:intro}

The soft covering lemma~\cite{Wyner75,HanVerdu93,Cuff13,CuffStrong15,Cuff16}
asserts that a random codebook of rate $R$ larger than the channel
input-output per-letter mutual information, $I(X;Y)$, causes
the output mixture distribution $P_{Y^n|\calC}$ to be nearly indistinguishable
from the i.i.d.\ distribution, $P_Y^{\otimes n}$, induced by the input
distribution and the channel.
Introduced by Wyner~\cite{Wyner75} as a tool for proving
the common information theorem and systematically developed
by Han and Verd\'u~\cite{HanVerdu93} under the name of channel
resolvability, soft covering underpins wiretap secrecy,
identification coding, common randomness generation, and
channel synthesis~\cite{Cuff13}.

A substantial body of work characterizes the \emph{exact exponent}
at which the soft covering approximation improves for $R>I(X;Y)$,
under various distance measures.
Hayashi~\cite{Hayashi06} obtained a lower bound on the exponent
under the Kullback-Leibler (KL) divergence.
Parizi, Telatar and Merhav~\cite{PTM17} derived the exact exponent
under KL divergence with application to the wiretap channel.
Yu and Tan~\cite{YuTan19} characterized the exact exponent under
R\'enyi divergence of order $\alpha\in(0,2)$.
Yagli and Cuff~\cite{YagliCuff19} derived the exact exponent under
total variation distance.
Li, Li and Yu~\cite{LLY26} recently derived a strong-converse
exponent under KL divergence.
Cuff~\cite{CuffStrong15,Cuff16} showed that soft covering holds
with doubly exponential probability over the random codebook,
enabling applications via the union bound.

In a companion paper~\cite{MerhavCompanion}, we studied soft covering
through the lens of Neyman--Pearson hypothesis testing and derived
the false-alarm (FA) and missed-detection
(MD) error exponents,
$\EFA^{\mbox{\tiny a}}(\tau,R)$ and $\EMD^{\mbox{\tiny a}}(\tau,R)$,
as functions of the coding rate $R$ and the log-likelihood ratio (LLR) threshold $\tau$.
The analysis in \cite{MerhavCompanion} was carried out for the \emph{average
code}, namely, it concerns the long-block limits of the normalized logarithms of the expected
error probabilities of both kinds, where the expectation is with respect to
(w.r.t.) the random selection of a fixed-composition code. In the jargon of
statistical physics, these exponents can be referred to as \emph{annealed} error exponents, 
in analogy to annealed free energies (and other physical quantities)
associated with disordered physical systems.
The analysis in \cite{MerhavCompanion} revealed a rich phase structure and characterized
the soft-covering phenomenon in Neyman--Pearson terms.

A natural follow-up question that may arise
concerns the characterization of \emph{quenched} error exponents, namely, the long-block limits of the normalized
expected logarithms (as opposed to logarithms of expectations) of the FA and
MD probabilities, and a further question is whether they also reflect the almost-sure limits, in
which case, they can be referred to as \emph{typical-code} error exponents,
and we say that the \emph{self-averaging} property (which is actually an
ergodic property) holds true.
Obviously, by Jensen's inequality, the quenched FA and MD error exponents
cannot be smaller than their respective annealed counterparts, but can we
characterize them beyond that? 

We show in this work that the quenched FA and MD error
exponents are equal to the respective annealed error exponents of
\cite{MerhavCompanion} for all $R$ and $\tau$, and they also admit self-averaging.
This equivalence between quenched and annealed exponents is quite
surprising in view of the fact that in other related problem areas in information
theory, there are gaps between annealed and quenched quantities. Random coding
error exponents of channel coding (see, e.g., \cite{Gallager65,Gallager68}), which are traditionally defined in the
annealed sense, are smaller in general than their typical-code counterparts
(see, e.g., \cite{BargForney02}, \cite{merhavtypical} and references therein) at a certain range of low coding rates.
In statistical physics, the analogous
distinction between quenched and annealed free energies
of disordered systems with random parameters
is fundamental~\cite{MezardParisi87}, \cite{MM09}, and the gaps are visible, especially at
low temperatures, and they convey different meanings. In Section 
\ref{sec:jensen}, an attempt is made to provide some insight and intuition
behind the equivalence between quenched and annealed exponents in
our problem, as opposed to the inequivalence in the other problem areas.

We then proceed to study the effect of mismatched detection in the context of
soft covering. Suppose that 
the detector implements the likelihood ratio test (LRT) w.r.t.\ the wrong
channel model, say $\tilde{W}$, instead of the
true channel $W$. As in mismatched decoding (see, e.g.,
\cite{SFSM20} and many references therein),
this mismatch may arise from either a wrong assumption on
the channel statistics, or due to limitations on the implementation of
the exact LRT.
We show that self-averaging persists under mismatch,
characterize the degradation of the exponents,
and identify a precise connection between the shift
in the soft-covering critical rate and the generalized mutual information (GMI) of
mismatched decoding.

The outline of this article is as follows. 
Section~\ref{sec:setup} recalls the setup and notation.
Sections~\ref{sec:FA} and~\ref{sec:MD} prove the main self-averaging theorems.
Section~\ref{sec:jensen} discusses why the Jensen gap vanishes.
Section~\ref{sec:mismatch} develops the mismatched detection theory,
including the annealed mismatched exponents, self-averaging under mismatch,
the phase structure, and a connection to the GMI.
Finally, in Section~\ref{sec:conclusion} we summarize and conclude.

\section{Setup, Notation Conventions and Objectives}
\label{sec:setup}

We adopt the same notation conventions as in \cite{MerhavCompanion}.
Let $W:\calX\to\calY$ be a discrete memoryless channel (DMC) with finite input
and output alphabets, $\calX$ and $\calY$,
and let $P_X$ be a fixed input distribution. We denote by $P_Y$ the output marginal
induced by $P_X$ and $W$, that is,
\begin{equation}
P_Y(y)=\sum_{x\in\calX} P_X(x)W(y|x),~~~~\forall~y\in\calY.
\end{equation}
For a given positive integer $n$ and a sequence $x^n=(x_1,\ldots,x_n)\in\calX^n$,
the channel output distribution is given by
\begin{equation}
W^n(y^n|x^n)=\prod_{i=1}^n W(y_i|x_i),~~~~~y^n\in\calY^n,
\end{equation}
where $\calX^n$ and $\calY^n$ are the $n$-th Cartesian powers of $\calX$ and
$\calY$, respectively.
We also denote 
\begin{equation}
P_Y^{\otimes n}(y^n)=\prod_{i=1}^n P_Y(y_i).
\end{equation}
The \emph{type} (empirical distribution) of $x^n$ is
defined as $\hat{P}_{x^n}(a)=\frac{1}{n}\#\{i:x_i=a\}$, $a\in\calX$.
The \emph{type class} $\calT(Q_X)\subseteq\calX^n$ is the set
of all sequences of type $\hat{P}_{x^n}(a)=Q_X$.
For a pair $(x^n,y^n)$, the \emph{joint type} is
$Q_{XY}(a,b)=\frac{1}{n}\#\{i:x_i=a,~y_i=b\}$, $(a,b)\in\calX\times\calY$,
with marginals $Q_X$, $Q_Y$ and conditional type $Q_{Y|X}$.
The \emph{conditional type class} is defined as
$\calT(Q_{X|Y}|y^n)=\{x^n:~(x^n,y^n)\in\calT(Q_{XY})\}$.
Likewise, $\calT(Q_{Y|X}|x^n)\dfn\{y^n:~(x^n,y^n)\in\calT(Q_{XY})\}$.
We always restrict to types with $Q_X=P_X$.

Information functionals derived from a given probability distribution
will be subscripted by the notation of this distribution. When this
is an empirical distribution $Q_{XY}$, the subscript will be
abbreviated by $Q$, in order to avoid cumbersome notation. Thus,
$H_Q(Y)=-\sum_y Q_Y(y)\log Q_Y(y)$ is the marginal empirical entropy of an
auxiliary random vector $Y$ governed by $Q_Y$,
$H_Q(Y|X)=-\sum_{x,y}Q_{XY}(x,y)\log Q_{Y|X}(y|x)$ is the
conditional empirical entropy of $Y$ given $X$, where $(X,Y)$ are jointly governed by $Q_{XY}$, and
$I_Q(X;Y)=H_Q(Y)-H_Q(Y|X)$ is the mutual information under $Q_{XY}$.

The channel codebook $\calC=\{x^n(1),\ldots,x^n(M)\}$,
$M=e^{nR}$, $R$ being the coding rate in nats per channel use, has its codewords drawn independently and uniformly
from the type class $\mathcal{T}(P_X)$.
For a given type $Q_{XY}$, $y^n\in\calY^n$, let $N(Q_{XY}|y^n)$ 
denote the number of codewords $\{x^n(m)\}$ in $\calC$ whose joint type with
$y^n$ is $Q_{XY}$.
Since the codewords are randomly selected
independently, $N(Q_{XY}|y^n)$ is a binomial random variable with $M=e^{nR}$ trials and
success rate of the exponential order of $e^{-nI_{Q}(X;Y)}$.
When $I_Q(X;Y)\leq R$, there are typically about $e^{n[R-I_Q(X;Y)]}\to\infty$ codewords in
$\calT(Q_{X|Y}|y^n)$. We therefore
refer to $Q_{XY}$ as a \emph{bulk type} whenever $I_Q(X;Y)\leq R$.
On the other hand, when $R\le I_Q(X;Y)$ the probability that $N(Q_{XY}|y^n)\ge
1$ tends to zero at the exponential rate of $e^{n[I_Q(X;Y)-R]}$, and so, there are typically no codewords at
all in $\calT(Q_{X|Y}|y^n)$, and then $Q_{XY}$ is referred to as a
\emph{sparse type}.

The hypothesis testing problem pertaining to soft covering is in distinguishing between the following
two hypotheses concerning the probability distribution that governs a vector of observations, $y^n\in\calY^n$:\\
$\calH_0$: $y^n$ is governed by the i.i.d.\ distribution, $P_Y^{\otimes n}$.\\
$\calH_1$: $y^n$ is governed by the mixture disribution
$\frac{1}{M}\sum_{m=1}^M W^n(y^n|x^n(m))$.

Defining
\begin{equation}
S(y^n,\calC)=\sum_{m=1}^M W^n(y^n|x^n(m)),
\end{equation}
the LLR statistic is defined as
\begin{equation}
	\Lambda(y^n,\calC)=
	=\frac{1}{n}\log\frac{\frac{1}{M}\sum_{m=1}^MW^n(y^n|x^n)}{P_Y^{\otimes n}(y^n)}
=\frac{1}{n}\log\frac{S(y^n,\calC)}{M\cdot P_Y^{\otimes n}(y^n)}.
\end{equation}
For a given threshold parameter $\tau$, if $\Lambda(y^n,\calC)\ge\tau$, hypothesis $\calH_1$ is accepted, otherwise,
hypothesis $\calH_0$ is accepted. 
The FA and MD probabilities (for a given $\calC$ known to the detector) are then defined as
\begin{align}
  \alpha_n(\tau,R,\calC)
  &\dfn\sum_{\{y^n:~\Lambda(y^n,\calC)\geq\tau\}}P_Y^{\otimes n}(y^n)\\
  \beta_n(\tau,R,\calC)
  &\dfn\frac{1}{M}\sum_{m=1}^M\sum_{\{y^n:~\Lambda(y^n,\calC)<\tau\}}
    W^n(y^n|x^n(m)).
\end{align}
The corresponding annealed (average-code) exponents are defined as
\begin{eqnarray}
\EFA^{\mbox{\tiny a}}(\tau,R)&\dfn&-\lim_{n\to\infty}\frac{1}{n}\log
\E\{\alpha_n(\tau,R,\calC)\},
\label{eq:ann_FA_def}\\
\EMD^{\mbox{\tiny a}}(\tau,R)&\dfn&-\lim_{n\to\infty}\frac{1}{n}\log\E\{\beta_n(\tau,R,\calC)\}
\label{eq:ann_MD_def}
\end{eqnarray}
where the expectations are w.r.t.\ the random selection of $\calC$, and where
the existence of these limits was shown in \cite{MerhavCompanion}.
For a joint type $Q_{XY}$ with $Q_X=P_X$, define:
\begin{align}
  D_{\mbox{\tiny m}}(Q_Y)
  &\dfn D(Q_Y\|P_Y),\\
  D_{\mbox{\tiny c}}(Q_{XY})
  &\dfn D(Q_{Y|X}\|W|P_X)
   = \sum_{x,y}P_X(x)Q_{Y|X}(y|x)\log\frac{Q_{Y|X}(y|x)}{W(y|x)},
\end{align}
and observe that $D_{\mbox{\tiny m}}(Q_Y)\leq D_{\mbox{\tiny c}}(Q_{XY})$
by the data processing inequality of the KL divergence (see also \cite{MerhavCompanion}).
We also define
\begin{equation}
\lambda(Q_{XY},R)
\dfn D_{\mbox{\tiny m}}(Q_Y)-D_{\mbox{\tiny c}}(Q_{XY})+[I_Q(X;Y)-R]_+,
\label{eq:lambda_def}
\end{equation}
and
\begin{eqnarray}
\Delta(Q_Y,R)
&\dfn&\max_{\substack{Q'_{XY}:\,Q'_X=P_X,\,Q'_Y=Q_Y\\
I_{Q'}(X;Y)\leq R}}
\bigl[D_{\mbox{\tiny m}}(Q_Y)-D_{\mbox{\tiny c}}(Q'_{XY})\bigr]\nonumber\\
&=&\max_{\substack{Q'_{XY}:\,Q'_X=P_X,\,Q'_Y=Q_Y\\
I_{Q'}(X;Y)\leq R}}
\lambda(Q'_{XY},R).
\label{eq:Delta_def}
\end{eqnarray}
In \cite{MerhavCompanion} the following single-letter expressions for the
annealed exponents $\EFA^{\mbox{\tiny a}}(\tau,R)$
and $\EMD^{\mbox{\tiny a}}(\tau,R)$ were derived:
\begin{eqnarray}
\EFA^{\mbox{\tiny a}}(\tau,R)&=&
\min_{\substack{Q_{XY}:\,Q_X=P_X\\\lambda(Q_{XY},R)\geq\tau}}
\bigl\{D_{\mbox{\tiny m}}(Q_Y)+[I_Q(X;Y)-R]_+\bigr\},
\label{eq:ann_FA}\\
\EMD^{\mbox{\tiny a}}(\tau,R)&=&
\min_{\substack{Q_{XY}:\,Q_X=P_X\\\lambda(Q_{XY},R)<\tau\\
\Delta(Q_Y,R)<\tau}}
D_{\mbox{\tiny c}}(Q_{XY}).
\label{eq:ann_MD}
\end{eqnarray}

Some more notation conventions that will be used throughout the sequel include the following. The logarithmic
function $\log(\cdot)$ will be understood to be defined to the base
$\mbox{e}$ unless specified otherwise. The notation $[\cdot]_+$ designates the
positive clipping operator, namely, given a real number $t$, $[t]_+$ is defined as
$\max\{t,0\}$. For two positive sequences, $\{a_n\}_{n\ge 1}$ and
$\{b_n\}_{n\ge 1}$, the notation $a_n\doteq b_n$ means equivalence in the
exponential scale, i.e., $\lim_{n\to\infty}\frac{1}{n}\log\frac{a_n}{b_n}=0$.

Our first objective in this work is to derive the quenched (typical-code)
FA and MD exponents, defined as
\begin{eqnarray}
\EFA^{\mbox{\tiny q}}(\tau,R)&\dfn&-\lim_{n\to\infty}\frac{1}{n}\E\left\{\log
\alpha_n(\tau,R,\calC)\right\},
\label{eq:que_FA_def}\\
\EMD^{\mbox{\tiny
q}}(\tau,R)&\dfn&-\lim_{n\to\infty}\frac{1}{n}\E\left\{\log\beta_n(\tau,R,\calC)\right\}
\label{eq:que_MD_def}
\end{eqnarray}
where it should be noted that here the exponents are defined in terms of
expected logarithms, as opposed to eqs.\ (\ref{eq:ann_FA_def}) and
(\ref{eq:ann_MD_def}), that are defined in terms of logarithms of expectations.
We will prove that $\EFA^{\mbox{\tiny q}}(\tau,R)=\EFA^{\mbox{\tiny a}}(\tau,R)$
and $\EMD^{\mbox{\tiny q}}(\tau,R)=\EMD^{\mbox{\tiny a}}(\tau,R)$,
and moreover that these common values are also the almost-sure limits of
$\left\{-\frac{1}{n}\log\alpha_n(\tau,R,\calC)\right\}$ and
$\left\{-\frac{1}{n}\log\beta_n(\tau,R,\calC)\right\}$, respectively ---
i.e., the exponents are self-averaging.

Our second objective will be to expand the scope to mismatched hypothesis
testing, where the detector assumes the wrong channel $\tilde{W}$ instead of
the true one, $W$. It turns out that the expressions of the error exponents extend relatively
straightforwardly to the mismatched case. Moreover, both the quenched-annealed
equivalence and the self-averaging properties are preserved. The more
interesting and less trivial part is the effect of mismatch on the critical rate $R$ beyond
which there is no value of the threshold $\tau$ where both error exponents are
strictly positive, which in the matched case is $I(X;Y)$ --- the soft-covering
limit. As mentioned before, it will turn out that there is a non-trivial relationship between the mismatched critical
rate and the GMI of mismatched decoding.

\section{The FA Exponent}
\label{sec:FA}

Our first theorem in this work asserts that 
$\left\{-\frac{1}{n}\log\alpha_n(\tau,R,\calC)\right\}_{n\ge 1}$
is self-averaging and its almost-sure limit is $\EFA^{\mbox{\tiny
a}}(\tau,R)$. It will then follow that
$\left\{-\frac{1}{n}\E\{\log\alpha_n(\tau,R,\calC)\}\right\}_{n\ge 1}$
converges to the same limit, which then must coincide with $\EFA^{\mbox{\tiny
q}}(\tau,R)$. 

\begin{theorem}
\label{thm:FA}
For all $\tau\in\mathbb{R}$ and $R>0$:
\begin{equation}
  \frac{1}{n}\log\alpha_n(\tau,R,\calC)
  \;\xrightarrow{n\to\infty}\;-\EFA^{\mbox{\tiny a}}(\tau,R)
  \quad\textup{a.s.\ over }\calC.
  \label{eq:FA_as}
\end{equation}
\end{theorem}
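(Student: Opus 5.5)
We need both an almost-sure lower bound and an almost-sure upper bound on $-\frac{1}{n}\log\alpha_n(\tau,R,\calC)$, each converging to $\EFA^{\mbox{\tiny a}}(\tau,R)$.

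\emph{Lower bound on the exponent.} This direction comes directly from the annealed result. By Markov's inequality,
\[
\Pr\bigl\{\alpha_n(\tau,R,\calC)\ge e^{n\epsilon}\,\E\{\alpha_n(\tau,R,\calC)\}\bigr\}\le e^{-n\epsilon}.
\]
This probability is summable in $n$, so the Borel--Cantelli lemma applies. Combined with the existence of the limit defining $\EFA^{\mbox{\tiny a}}$, it gives $\liminf_n -\frac{1}{n}\log\alpha_n\ge \EFA^{\mbox{\tiny a}}(\tau,R)-\epsilon$ almost surely. Letting $\epsilon\downarrow 0$ along a countable sequence completes this half.

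\emph{Upper bound on the exponent.} Let $Q^*_{XY}$ attain the minimum in (\ref{eq:ann_FA}), where $\lambda(Q^*_{XY},R)\ge\tau$. Using continuity, perturb it slightly so that $\lambda(Q^*_{XY},R)>\tau$ with only an $\epsilon$ loss in the objective. Then approximate it by $n$-types $Q_n\to Q^*$. I will show that, almost surely, $\alpha_n\ge e^{-n[D_{\mbox{\tiny m}}(Q^*_Y)+[I_{Q^*}-R]_+ +\epsilon]}$ for all large $n$. The idea is to lower bound $\alpha_n$ by $P_Y^{\otimes n}(\mathcal{G}_n)$, where
\[
\mathcal{G}_n=\{y^n\in\calT(Q_{n,Y}):~N(Q_n|y^n)\ge 1\}.
\]
The justification is that $S(y^n,\calC)\ge N(Q_n|y^n)\,e^{-n[D_{\mbox{\tiny c}}(Q_n)+H_{Q_n}(Y|X)]}$. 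Together with $P_Y^{\otimes n}(y^n)=e^{-n[D_{\mbox{\tiny m}}+H_Q(Y)]}$, this shows that every $y^n\in\mathcal{G}_n$ has $\Lambda\ge \lambda$-type quantity $\ge\tau$. The sparse case uses $N\ge1$; in the bulk case one uses $N\ge e^{n[R-I-\epsilon]}$, giving exactly $\lambda(Q_n,R)-\epsilon>\tau$.

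It therefore suffices to show that $|\mathcal{G}_n|$ concentrates. The required size is $|\calT(Q_{n,Y})|e^{-n[I-R]_+}$, and in the bulk case we need $N\ge e^{n(R-I-\epsilon)}$. I will handle this with second-moment (Chebyshev) estimates. Write $Z=\sum_{y^n\in\calT(Q_Y)}1\{N(Q_n|y^n)\ge1\}$ (or the bulk variant). Then $\E Z\doteq|\calT(Q_Y)|\min\{1,e^{-n(I-R)}\}$. For the variance, the terms for pairs $y^n,y'^n$ are dependent only through shared codewords. Decomposing by the codeword index, $\mathrm{Var}(Z)\le \E Z+\sum_m\Pr\{x^n(m)\in\calT(Q|y^n)\cap\calT(Q|y'^n)\}$-type terms. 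These are bounded by $M\cdot|\calT(Q_{Y|X}|x)|^2/|\calT(P_X)|$ times $|\calT(Q_Y)|^{-2}$-normalization, giving a ratio $\mathrm{Var}/(\E Z)^2\lesssim e^{-n\delta}$. The point is that $|\calT(Q_Y)|\,e^{-n[I-R]_+}$ is exponentially large compared with the correlation mass. This holds because $H_Q(Y)>[I-R]_+$ whenever $R>0$ and the relevant quantities are nondegenerate.

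In the bulk case I would apply the second-moment argument to the indicator that $N$ is small, or alternatively use a Chernoff bound for the binomial $N(Q|y^n)$ at each $y^n$ together with a union bound over $y^n$. That binomial lower tail is doubly exponentially small, so actually every $y^n\in\calT(Q_Y)$ has $N\ge e^{n(R-I-\epsilon)}$ with overwhelming probability, and the bulk case needs no variance computation. With summable failure probabilities, Borel--Cantelli gives the almost-sure upper bound on the exponent.

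The main obstacle is the sparse case, $I_{Q^*}>R$, and specifically the variance bound for $Z$. A single codeword $x^n(m)$ covers the correlated set $\calT(Q_{Y|X}|x^n(m))$. Hence $Z$ is essentially $|\bigcup_m \calT(Q_{Y|X}|x^n(m))|$, and concentration must come from the number of codewords $M=e^{nR}$ being exponentially large, not from independence across $y^n$. I would first try the union-size formulation. This gives the lower bound $|\bigcup_m|\ge \sum_m|\calT(\cdot|x^n(m))|-\sum_{m\ne m'}|\calT(\cdot|x^n(m))\cap\calT(\cdot|x^n(m'))|$, to be applied to a sub-codebook of suitable size. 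The pairwise intersections are controlled in expectation, which is exponentially small relative to the main term, and in probability via Markov's inequality. Since $R>0$, the main term concentrates by Chernoff bounds on sums of $e^{nR}$ i.i.d.\ bounded terms, and the lost exponent is only $\epsilon$.
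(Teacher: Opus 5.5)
Your proposal is correct and, in the form you settle on, matches the paper's proof: Markov plus Borel--Cantelli for the upper bound on $\alpha_n$, and a binomial Chernoff bound with a union over $\calT(Q^*_Y)$ in the bulk case. In the sparse case it is likewise the paper's Bonferroni lower bound on $P_Y^{\otimes n}\bigl(\bigcup_m \calT(Q^*_{Y|X}|x^n(m))\bigr)$, with Markov and Borel--Cantelli controlling the pairwise overlaps; your $n$-type approximation and strict-feasibility perturbation there are, if anything, more careful than the paper. The one simplification you missed is that the first-order term $\sum_m P_Y^{\otimes n}(\calT(Q^*_{Y|X}|x^n(m)))$ is the same deterministic number for every codebook, since it depends only on the joint type, so you need neither a sub-codebook nor Chernoff concentration, because the expected overlap sum is already smaller by the factor $e^{-n[I_{Q^*}(X;Y)-R]}$.
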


\begin{proof}
Let $Q^*_{XY}$ be the minimizer in~\eqref{eq:ann_FA},
so $\lambda(Q_{XY}^*,R)\geq\tau$ and
$\EFA^{\mbox{\tiny a}}(\tau,R)=D_{\mbox{\tiny m}}(Q^*_Y)+[I_{Q^*}(X;Y)-R]_+$.
Fix $\epsilon>0$. The proof has two parts: the lower bound, 
asserting that 
$$\liminf_{n\to\infty}\frac{1}{n}\log\alpha_n(\tau,R,\calC)\ge
-\EFA^{\mbox{\tiny a}}(\tau,R)~~a.s.,$$ 
and the upper bound, namely,
$$\limsup_{n\to\infty}\frac{1}{n}\log\alpha_n(\tau,R,\calC)\le -\EFA^{\mbox{\tiny a}}(\tau,R)~~a.s.$$

Beginning from the lower bound,
we distinguish between two cases, according to whether $Q_{XY}^*$ is sparse or bulk.
Suppose first that $Q_{XY}^*$ is sparse ($I_{Q^*}(X;Y)>R$). Then,
in view of eq.\ \eqref{eq:ann_FA}, we have
\begin{equation}
\EFA^{\mbox{\tiny a}}(\tau,R)=D_{\mbox{\tiny m}}(Q^*_Y)+I_{Q^*}(X;Y)-R.
\end{equation}
Define 
\begin{equation}
\ell(Q_{XY})\dfn H_Q(Y|X)+D_{\mbox{\tiny c}}(Q_{XY}),
\end{equation}
so that $W^n(y^n|x^n)=e^{-n\ell(Q_{XY})}$ for $(x^n,y^n)\in\calT(Q_{XY})$.
Every codeword $x^n(m)$ together with every
$y^n\in\mathcal{A}_m\dfn\calT(Q^*_{Y|X}|x^n(m))$
contributes $W^n(y^n|x^n(m))=e^{-n\ell(Q^*_{XY})}$ to $S(y^n,\calC)$.
Since $P_Y^{\otimes n}(y^n)=e^{-n[H_{Q^*}(Y)+D_{\mbox{\tiny m}}(Q^*_Y)]}$ and
$M=e^{nR}$, we get
\begin{align}
\Lambda(y^n,\calC)
&\geq\frac{1}{n}\log\frac{e^{-n\ell(Q^*_{XY})}}{e^{nR}
\cdot e^{-n(H_{Q^*}(Y)+D_{\mbox{\tiny m}}(Q^*_Y))}}
\notag\\
&= D_{\mbox{\tiny m}}(Q^*_Y)-D_{\mbox{\tiny c}}(Q^*_{XY})+I_{Q^*}(X;Y)-R
\notag\\
 &= \lambda(Q^*_{XY},R)
\geq\tau.
\label{eq:LLR_sparse}
\end{align}
Here we used the fact that for
sparse $Q_{XY}^*$, $I_{Q^*}(X;Y)-R=[I_{Q^*}(X;Y)-R]_+$, and so,
$D_{\mbox{\tiny m}}(Q_Y^*)-D_{\mbox{\tiny c}}(Q_{XY}^*)+I_{Q^*}(X;Y)-R=\lambda(Q^*_{XY},R)$.
It follows that every $y^n\in\mathcal{A}_m$ satisfies the FA condition,
for every $m$ and every codebook $\calC$.
The probability of $\mathcal{A}_m$ under $P_Y^{\otimes n}$ satisfies,
at the exponential scale,
\begin{equation}
P_Y^{\otimes n}(\mathcal{A}_m)
\doteq e^{nH_{Q^*}(Y|X)}\cdot e^{-n[H_{Q^*}(Y)+D_{\mbox{\tiny m}}(Q^*_Y)]}
=e^{-n[I_{Q^*}(X;Y)+D_{\mbox{\tiny m}}(Q^*_Y)]}.
\label{eq:Am_measure}
\end{equation}
Clearly, 
\begin{equation}
\alpha_n(\tau,R,\calC)\ge P_Y^{\otimes n}\left\{\bigcup_{m=1}^M
\mathcal{A}_m\right\},
\end{equation}
which we lower-bound via inclusion-exclusion.
This is the key structural observation: $P_Y^{\otimes n}(\mathcal{A}_m)\doteq e^{-n[I_{Q^*}(X;Y)+D_{\mbox{\tiny m}}(Q^*_Y)]}$
is the \emph{same deterministic quantity} $Z$ for every $m$ and every codebook $\calC$,
since it depends only on the fixed type $Q^*_{XY}$ and the fixed distribution $P_Y$,
not on which specific codeword $x^n(m)$ was drawn.
Accordingly,
\begin{equation}
\sum_{m=1}^M P_Y^{\otimes n}(\mathcal{A}_m)
= M\cdot Z
\doteq e^{-n\EFA^{\mbox{\tiny a}}(\tau,R)}
\label{eq:firstmoment}
\end{equation}
deterministically.
For the pairwise overlaps, since $x^n(m)$ and $x^n(m')$ are independent for $m\ne m'$,
for each fixed $y^n\in\calT(Q^*_Y)$ the events $\{y^n\in\mathcal{A}_m\}$ and
$\{y^n\in\mathcal{A}_{m'}\}$ are independent. Hence:
\begin{eqnarray}
\E\left\{P_Y^{\otimes n}(\mathcal{A}_m\cap\mathcal{A}_{m'})\right\}
&=&\sum_{y^n}P_Y^{\otimes n}(y^n)\cdot
\Pr\{\mathcal{A}_m~\mbox{includes}~y^n\}\cdot\Pr\{\mathcal{A}_{m'}~\mbox{includes}~y^n\}\nonumber\\
&\doteq& e^{-n[D_{\mbox{\tiny m}}(Q^*_Y)+2I_{Q^*}(X;Y)]},
\end{eqnarray}
and so:
\begin{equation}
\E\left\{\sum_{m\ne m'}
P_Y^{\otimes n}(\mathcal{A}_m\cap\mathcal{A}_{m'})\right\}
\doteq e^{-n[D_{\mbox{\tiny m}}(Q^*_Y)+2I_{Q^*}(X;Y)-2R]}.
\label{eq:overlap_expectation}
\end{equation}
In the sparse case $I_{Q^*}(X;Y)>R$, the exponent in
\eqref{eq:overlap_expectation} exceeds 
$\EFA^{\mbox{\tiny a}}(\tau,R)=D_{\mbox{\tiny m}}(Q^*_Y)+I_{Q^*}(X;Y)-R$
by the amount $I_{Q^*}(X;Y)-R>0$.
Hence by Markov's inequality and the Borel--Cantelli lemma, for almost every $\calC$
and all large $n$:
\begin{equation}
\sum_{m\ne m'}P_Y^{\otimes n}(\mathcal{A}_m\cap\mathcal{A}_{m'})
\le\tfrac{1}{2}e^{-n[\EFA^{\mbox{\tiny a}}(\tau,R)+\epsilon/2]}.
\label{eq:overlap_as}
\end{equation}
By the inclusion-exclusion principle, using~\eqref{eq:firstmoment}
and the a.s.\ event~\eqref{eq:overlap_as}:
\begin{equation}
	P_Y^{\otimes n}\left\{\bigcup_{m=1}^M\mathcal{A}_m\right\}
	\ge\sum_{m=1}^M P_Y^{\otimes n}(\mathcal{A}_m)
  -\sum_{m\ne m'}P_Y^{\otimes n}(\mathcal{A}_m\cap\mathcal{A}_{m'})
\ge\tfrac{1}{2}M\cdot Z
\doteq\tfrac{1}{2}e^{-n\EFA^{\mbox{\tiny a}}(\tau,R)},
\end{equation}
so $\alpha_n(\tau,R,\calC)\ge e^{-n[\EFA^{\mbox{\tiny a}}(\tau,R)+\epsilon]}$ a.s.

Consider next the case where $Q_{XY}^*$ is bulk ($I_{Q^*}(X;Y)\leq R$). In this case,
\begin{equation}
\EFA^{\mbox{\tiny a}}(\tau,R)=D_{\mbox{\tiny m}}(Q^*_Y).
\end{equation}
Since the feasible set $\{\lambda(Q_{XY},R)\geq\tau\}$ is closed and the
infimum of $D_{\mbox{\tiny m}}(Q_Y)+[I_Q(X;Y)-R]_+$ over it equals $\EFA^{\mbox{\tiny a}}(\tau,R)$,
for any $\epsilon>0$ we may choose $Q^*_{XY}$ with $\lambda(Q^*_{XY},R)>\tau$
(strictly inside the feasible set)
and $D_{\mbox{\tiny m}}(Q^*_Y)\leq\EFA^{\mbox{\tiny a}}(\tau,R)+\epsilon/2$.
As mentioned before, $N(Q^*_{XY}|y^n)$ is a binomial random variable with $M=e^{nR}$ trials and
success rate of the exponential order of $e^{-nI_{Q^*}(X;Y)}$.
Since $\mu_n\dfn\E\{N(Q^*_{XY}|y^n)\}\doteq e^{n[R-I_{Q^*}(X;Y)]}\to\infty$, 
the Chernoff bound yields, for any $\delta\in(0,1)$:
\begin{equation}
  \Pr\bigl\{N(Q^*_{XY}|y^n)<(1-\delta)\mu_n\bigr\}
  \leq e^{-\delta^2\mu_n/2}
	= \exp\{-\delta^2 e^{n[R-I_{Q^*}(X;Y)]}/2\},
\end{equation}
which is doubly exponentially small as a function of $n$.
By the union bound over $|\calT(Q^*_Y)|\leq e^{nH_{Q^*}(Y)}$
$Q_Y^*$-typical sequences and the Borel-Cantelli lemma,
almost surely, for all large $n$,
$N(Q^*_{XY}|y^n)\geq(1-\delta)\mu_n$ for all $y^n\in\calT(Q^*_Y)$ at the same
time. On this a.s.\ event, for every $y^n\in\calT(Q^*_Y)$:
\begin{align}
\Lambda(y^n,\calC)
&\geq\frac{1}{n}\log\frac{(1-\delta)\,e^{n[R-I_{Q^*}(X;Y)]}
\cdot e^{-n\ell(Q^*_{XY})}}{e^{nR}
\cdot e^{-n[H_{Q^*}(Y)+D_{\mbox{\tiny m}}(Q^*_Y)]}}
\notag\\
&= D_{\mbox{\tiny m}}(Q^*_Y)-D_{\mbox{\tiny c}}(Q^*_{XY})+\frac{\log(1-\delta)}{n}
\notag\\
&= \lambda(Q^*_{XY},R)+\frac{\log(1-\delta)}{n}
> \tau+\frac{\log(1-\delta)}{n}
\label{eq:LLR_bulk}
\end{align}
which exceeds $\tau$ for all large $n$
(since $\lambda(Q^*_{XY},R)>\tau$ and $\frac{\log(1-\delta)}{n}\to 0$),
where we have used $\lambda(Q_{XY}^*,R)=D_{\mbox{\tiny
m}}(Q_Y^*)-D_{\mbox{\tiny c}}(Q_{XY}^*)$ (as $[I_{Q^*}(X;Y)-R]_+=0$ for bulk $Q^*$).
Therefore, $\alpha_n(\tau,R,\calC)\geq P_Y^{\otimes n}(\calT(Q^*_Y))
\doteq e^{-nD_{\mbox{\tiny m}}(Q^*_Y)} \geq e^{-n[\EFA^{\mbox{\tiny a}}(\tau,R)+\epsilon/2]}$,
so $\alpha_n(\tau,R,\calC)\geq e^{-n[\EFA^{\mbox{\tiny a}}(\tau,R)+\epsilon]}$
a.s.\ for all large $n$.

Moving on to the upper bound,
we use the fact that $\alpha_n(\tau,R,\calC)$ is a non-negative random
variable whose expectation over the randomness of $\calC$ is
the \emph{annealed} FA probability:
\begin{equation}
  \E\{\alpha_n(\tau,R,\calC)\}
  = \sum_{y^n} P_Y^{\otimes n}(y^n)
    \cdot \Pr\bigl\{\Lambda(y^n,\calC)\geq\tau\bigr\}
  = \alpha_n^{\mbox{\tiny a}}(\tau,R).
  \label{eq:annealed_FA}
\end{equation}
By the annealed analysis of~\cite[Theorem 1]{MerhavCompanion},
\begin{equation}
  \alpha_n^{\mbox{\tiny a}}(\tau,R)
  \doteq e^{-n\EFA^{\mbox{\tiny a}}(\tau,R)},
\end{equation}
so for any $\epsilon>0$ and all large $n$, we have by Markov's inequality,
\begin{equation}
\Pr\left\{\calC:~\alpha_n(\tau,R,\calC)\ge 
e^{-n[\EFA^{\mbox{\tiny
a}}(\tau,R)-\epsilon]}\right\}\le\frac{e^{-n\EFA^{\mbox{\tiny
a}}(\tau,R)}}{e^{-n[\EFA^{\mbox{\tiny a}}(\tau,R)-\epsilon]}}
=e^{-n\epsilon},
\end{equation}
which is summable across all $n\ge 1$, and so, by the Borel-Cantelli lemma,
\begin{equation}
  \alpha_n(\tau,R,\calC) \leq e^{-n[\EFA^{\mbox{\tiny a}}(\tau,R)-\epsilon]}
  \label{eq:alpha_ub_total}
\end{equation}
eventually almost surely for all large $n$.
Combining with the a.s.\ lower bound, we observe that
$\frac{1}{n}\log\alpha_n(\tau,R,\calC)\to-\EFA^{\mbox{\tiny a}}(\tau,R)$ a.s.
\end{proof}

The result for the expected logarithm of the FA probability is asserted in the
following corollary:

\begin{corollary}
\label{cor:FA}
\begin{equation}
  -\frac{1}{n}\E\bigl\{\log\alpha_n(\tau,R,\calC)\bigr\}
  \;\xrightarrow{n\to\infty}\;\EFA^{\mbox{\tiny a}}(\tau,R),
\end{equation}
or, equivalently,
\begin{equation}
\EFA^{\mbox{\tiny q}}(\tau,R)=\EFA^{\mbox{\tiny a}}(\tau,R).
\end{equation}
\end{corollary}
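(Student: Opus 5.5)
The plan is to upgrade the almost-sure convergence of Theorem~\ref{thm:FA} to convergence of expectations. Define $U_n\dfn-\frac{1}{n}\log\alpha_n(\tau,R,\calC)\ge 0$. We already know that $U_n\to\EFA^{\mbox{\tiny a}}(\tau,R)$ almost surely, so it suffices to show that the family $\{U_n\}$ is uniformly integrable, or simply uniformly bounded. Once that is in hand, the bounded (or Vitali) convergence theorem gives $\E\{U_n\}\to\EFA^{\mbox{\tiny a}}(\tau,R)$. This identifies the limit defining $\EFA^{\mbox{\tiny q}}(\tau,R)$ in (\ref{eq:que_FA_def}) and shows that it exists and equals the annealed exponent.

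The first ingredient is the lower bound on $U_n$. Since $\alpha_n\le 1$, we have $U_n\ge0$. The Jensen direction $\E\{U_n\}\ge -\frac{1}{n}\log\E\{\alpha_n\}\to\EFA^{\mbox{\tiny a}}(\tau,R)$ already gives the liminf inequality for free.

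The main work is a deterministic upper bound on $U_n$ that holds for every codebook. This is also where the degenerate case $\alpha_n=0$ has to be excluded. I would reuse the construction from the lower-bound part of the proof of Theorem~\ref{thm:FA}.

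\textbf{Sparse case.} Fix a feasible type $Q^*_{XY}$ with $\lambda(Q^*_{XY},R)\ge\tau$, approximated by $n$-types with $Q_X=P_X$; by continuity and feasibility slack I would take it strictly feasible, as in the bulk argument. For any single codeword $x^n(1)$, every $y^n\in\calT(Q^*_{Y|X}|x^n(1))$ satisfies $\Lambda(y^n,\calC)\ge\tau$ deterministically, by (\ref{eq:LLR_sparse}). The reason is that $S(y^n,\calC)$ is at least the contribution of that one codeword. Hence, for every $\calC$,
\[
\alpha_n\ge P_Y^{\otimes n}(\mathcal{A}_1)\ge (n+1)^{-|\calX||\calY|}e^{-n[I_{Q^*}(X;Y)+D_{\mbox{\tiny m}}(Q^*_Y)]}.
\]
This gives $U_n\le I_{Q^*}(X;Y)+D_{\mbox{\tiny m}}(Q^*_Y)+o(1)$, which is a bound uniform in $\calC$.

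\textbf{Bulk case.} Here the same single-codeword argument fails, because $\lambda$ involves the bulk count. I would instead compute the LLR of this one-codeword set directly: it equals $\lambda(Q^*_{XY},R)+I_{Q^*}(X;Y)-R$, which may fall below $\tau$. The fallback is a crude but sufficient deterministic bound. For any $\calC$, the LLR is continuous in type, so I want some $y^n$ region with $\Lambda\ge\tau$ whose $P_Y^{\otimes n}$-probability is at least $e^{-nK}$ for a constant $K$. If $\tau$ is so large that no such region exists for some codebooks, $\alpha_n$ can vanish. In that situation I would argue that the annealed exponent is then infinite, or that the corresponding event has probability decaying fast enough to be harmless.

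The main obstacle is this worst-case control: bounding $\E\{U_n\mathbf{1}\{U_n>K\}\}$ uniformly in $n$. If no deterministic bound is available in the bulk case, I would instead split on the a.s.\ event. On the good event $G_n$ from the doubly-exponential Chernoff bound, $U_n\le\EFA^{\mbox{\tiny a}}+\epsilon$. On the complement, $\Pr\{G_n^c\}$ is doubly exponentially small, so even the crude bound from a single codeword, $U_n\le C$ with $C$ a constant such as $\max_{y}\log(1/P_Y(y))+R$ plus constants, makes the contribution vanish. The approach tried first is therefore that the single-codeword event $\{y^n\in\mathcal{A}_1\}$ holds deterministically whenever $\tau\le\lambda(Q^*,R)+I_{Q^*}-R$, which keeps $U_n$ bounded by a constant independent of $\calC$. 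From there, $\E\{U_n\}\le(\EFA^{\mbox{\tiny a}}+\epsilon)+C\Pr\{G_n^c\}\to\EFA^{\mbox{\tiny a}}+\epsilon$, which completes the argument.
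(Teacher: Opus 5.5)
Your overall strategy is sound, and on one point it is more careful than the paper. The paper pairs Fatou's lemma with Jensen's inequality and uses Jensen as the upper bound. But Jensen gives $\E\{U_n\}\ge-\frac{1}{n}\log\E\{\alpha_n\}$, which is the same direction as Fatou. So the paper's two steps both give only $\liminf_n\E\{U_n\}\ge\EFA^{\mbox{\tiny a}}(\tau,R)$, and the $\limsup$ is not covered. You are right that a domination (uniform-integrability) argument is needed, and that it must rule out $\alpha_n=0$.

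\textbf{The gap is in your bulk case.} Your constant $C$ on $G_n^c$ rests on a single-codeword argument. The single-codeword lower bound
$\Lambda(y^n,\calC)\ge\frac{1}{n}\log\frac{W^n(y^n|x^n(m))}{M P_Y^{\otimes n}(y^n)}=\sum_{x,y}Q_{XY}(x,y)\log\frac{W(y|x)}{P_Y(y)}-R$
can certify $\Lambda\ge\tau$ only when $\tau\le\sum_xP_X(x)\max_y\log\frac{W(y|x)}{P_Y(y)}-R$. For large $R$ this fails across the relevant range. For example, take a BSC with crossover $0.4$, uniform $P_X$, $R=0.3$ nats and $\tau\in(\log 1.2-0.3,\,0]$; there the minimizer is the bulk type $P_X\otimes W$ and $\EFA^{\mbox{\tiny a}}(\tau,R)=0$.

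Neither of your fallbacks works. In the bulk case the exponent is finite by definition, since it equals $D_{\mbox{\tiny m}}(Q^*_Y)$. And a doubly exponentially small $\Pr\{G_n^c\}$ does not help unless $U_n$ is finite there: $\alpha_n=0$ on any event of positive probability forces $\E\{U_n\}=+\infty$.

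\textbf{The missing observation is short.} In the bulk case,
$\tau\le\lambda(Q^*_{XY},R)=D_{\mbox{\tiny m}}(Q^*_Y)-D_{\mbox{\tiny c}}(Q^*_{XY})\le 0$ by the data processing inequality. For every codebook, using $W^n(\cdot|x^n(m))\ll P_Y^{\otimes n}$,
\[
\sum_{y^n:\,P_Y^{\otimes n}(y^n)>0}P_Y^{\otimes n}(y^n)\,e^{n\Lambda(y^n,\calC)}
=\frac{1}{M}\sum_{m=1}^M\sum_{y^n}W^n(y^n|x^n(m))=1 .
\]
Hence some $y^n$ with $P_Y^{\otimes n}(y^n)>0$ has $\Lambda(y^n,\calC)\ge 0\ge\tau$. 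It follows that $\alpha_n\ge\bigl(\min_{y:\,P_Y(y)>0}P_Y(y)\bigr)^n$ deterministically, which is exactly your crude constant, without the $+R$.

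Together with your sparse-case bound, this makes $U_n$ uniformly bounded for large $n$. Bounded convergence plus Theorem~\ref{thm:FA} then proves the corollary directly, and the $G_n$ splitting is unnecessary.

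A case-free alternative is double counting. For every codebook in $\calT(P_X)$, $\sum_{y^n\in\calT(Q_Y)}N(Q_{XY}|y^n)=M\,|\calT(Q_{Y|X}|x^n)|$. This yields $\max_{y^n}\Lambda(y^n,\calC)\ge\lambda(Q_{XY},R)-O\bigl(\frac{\log n}{n}\bigr)$ for any $n$-type $Q_{XY}$ with $Q_X=P_X$.
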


\begin{proof}
Let $U_n\dfn-\frac{1}{n}\log\alpha_n(\tau,R,\calC)\geq 0$.
By Theorem~\ref{thm:FA}, we know that $U_n\to\EFA^{\mbox{\tiny a}}(\tau,R)$ a.s.
By Fatou's lemma,
\begin{equation}
\liminf_{n\to\infty}\E\{U_n\}\geq\E\{\liminf_{n\to\infty}U_n\}=\EFA^{\mbox{\tiny a}}(\tau,R).
\end{equation}
On the other hand, by Jensen's inequality applied to the logarithmic function,
\begin{equation}
\E\{U_n\}
=-\frac{1}{n}\E\{\log\alpha_n(\tau,R,\calC)\}
\leq -\frac{1}{n}\log\E\{\alpha_n(\tau,R,\calC)\}
=-\frac{1}{n}\log\alpha_n^{\mbox{\tiny a}}(\tau,R)
\to\EFA^{\mbox{\tiny a}}(\tau,R).
\end{equation}
Combining both bounds, we conclude that $\lim_{n\to\infty}\E\{U_n\}=\EFA^{\mbox{\tiny a}}(\tau,R)$.
\end{proof}

\section{The MD Exponent}
\label{sec:MD}

We now present parallel results for the MD exponent.

\begin{theorem}
\label{thm:MD}
For all $\tau\in\mathbb{R}$ and $R>0$:
\begin{equation}
\frac{1}{n}\log\beta_n(\tau,R,\calC)
\;\xrightarrow{n\to\infty}\;-\EMD^{\mbox{\tiny a}}(\tau,R)
\quad\textup{a.s.\ over }\calC.
\label{eq:MD_as}
\end{equation}
\end{theorem}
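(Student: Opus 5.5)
We follow the template of Theorem~\ref{thm:FA}: an a.s.\ upper bound on $\beta_n$ obtained from the annealed analysis via Markov's inequality and Borel--Cantelli, and an a.s.\ lower bound obtained by exhibiting, for almost every codebook, a deterministic-size set of output sequences that lands in the MD region.

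\textbf{Upper bound.} Since $\E\{\beta_n(\tau,R,\calC)\}$ is the annealed MD probability, \cite[Theorem 1]{MerhavCompanion} gives $\E\{\beta_n\}\doteq e^{-n\EMD^{\mbox{\tiny a}}(\tau,R)}$. Markov's inequality then yields
\[
\Pr\bigl\{\beta_n\ge e^{-n[\EMD^{\mbox{\tiny a}}(\tau,R)-\epsilon]}\bigr\}\le e^{-n\epsilon/2}
\]
for large $n$. This is summable, so Borel--Cantelli gives $\limsup_n\frac{1}{n}\log\beta_n\le-\EMD^{\mbox{\tiny a}}(\tau,R)$ a.s.

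\textbf{Lower bound.} First, by continuity, pick $Q^*_{XY}$ that is $\epsilon$-optimal in \eqref{eq:ann_MD} and lies strictly inside the feasible set: $\lambda(Q^*_{XY},R)<\tau-\eta$ and $\Delta(Q^*_Y,R)<\tau-\eta$ for some $\eta>0$. Since $\Delta(\cdot,R)$ is only upper semicontinuous in $Q_Y$ through the constraint $I_{Q'}\le R$, reaching the interior may require perturbing toward a point where the constraint is slack. I would handle this by approximating $Q^*$ by types with $D_{\mbox{\tiny c}}$ close to optimal.

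For each $m$, let $\mathcal{B}_m=\calT(Q^*_{Y|X}|x^n(m))$. Then
\[
\beta_n\ge\frac{1}{M}\sum_m W^n\bigl(\mathcal{B}_m\cap\{\Lambda<\tau\}\,\big|\,x^n(m)\bigr),
\]
and $W^n(\mathcal{B}_m|x^n(m))\doteq e^{-nD_{\mbox{\tiny c}}(Q^*_{XY})}$ deterministically. It therefore suffices to show that, a.s., for most $m$ a non-negligible fraction of $\mathcal{B}_m$ has $\Lambda<\tau$.

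Fix $y^n\in\mathcal{B}_m$ and decompose $S(y^n,\calC)$ as the term from $x^n(m)$, whose exponent is governed by $\lambda(Q^*_{XY},R)$ when $Q^*$ is sparse (and by $-D_{\mbox{\tiny c}}+D_{\mbox{\tiny m}}$ relative to the normalization), plus the sum over the other codewords, which are independent of $y^n$. The contribution of the other codewords is $\sum_{Q'}N(Q'|y^n)e^{-n\ell(Q')}$. This is where the a.s.\ control enters:
\begin{itemize}
\item[(i)] For bulk $Q'$ with $I_{Q'}\le R$, Chernoff plus a union bound over $y^n$ and $Q'$ gives $N(Q'|y^n)\le(1+\delta)\mu_n$ doubly-exponentially reliably. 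Their contribution to $\Lambda$ is then at most $\Delta(Q^*_Y,R)+o(1)<\tau$.
\item[(ii)] For sparse $Q'$, with large probability $N(Q'|y^n)=0$, but not uniformly in $y^n$.
\end{itemize}
Point (ii) is the main obstacle. Some $y^n$ will be hit by sparse types with high LLR contribution. I would control this by a first-moment count: the expected $W^n(\cdot|x^n(m))$-mass of $y^n\in\mathcal{B}_m$ hit by some other codeword at a sparse $Q'$ whose contribution would push $\Lambda$ above $\tau$ is at most $e^{-nD_{\mbox{\tiny c}}(Q^*)}\cdot e^{nR}e^{-nI_{Q'}}$, summed over the polynomially many $Q'$.

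I would argue the following. Whenever $e^{n[R-I_{Q'}]}$ is not exponentially small, $Q'$ is bulk, so it is already covered by $\Delta$. Sparse $Q'$ have $R-I_{Q'}<0$, so the bad mass is an exponentially small fraction of $W^n(\mathcal{B}_m|x^n(m))$.

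Summing over $m$ and applying Markov plus Borel--Cantelli to the total bad mass shows that, a.s., $\beta_n\ge\frac{1}{2}e^{-nD_{\mbox{\tiny c}}(Q^*_{XY})}\ge e^{-n[\EMD^{\mbox{\tiny a}}+2\epsilon]}$ eventually. Combining this with the upper bound gives \eqref{eq:MD_as}. The analogue of Corollary~\ref{cor:FA} for the quenched MD exponent then follows from Fatou's lemma and Jensen's inequality exactly as before.
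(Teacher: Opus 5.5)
Your upper bound is the same as the paper's. For the lower bound you take a genuinely different route at the key step, and yours is the one that actually works. The paper controls all interfering codewords with a single almost-sure event that is uniform over $y^n\in\calT(Q^*_Y)$: a Chernoff bound for bulk $Q'_{XY}$, and, for sparse $Q'_{XY}$, the claim that $N'(Q'_{XY}|y^n)=0$ simultaneously for all $y^n$. It then credits all of $\calT(Q^*_{Y|X}|x^n(1))$ to the MD region.

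The sparse part of that event is false. The union bound $|\calT(Q^*_Y)|\,e^{-n[I_{Q'}(X;Y)-R]}$ is of exponential order at least $e^{nR}$, because $I_{Q'}(X;Y)\le H_{Q^*}(Y)$. Indeed, every $y^n\in\calT(Q'_{Y|X}|x^n(2))\subseteq\calT(Q^*_Y)$ has $N'\ge 1$, so the event fails for every codebook whenever such a sparse type exists. These hits do push $\Lambda$ above $\tau$ when $\lambda(Q'_{XY},R)\ge\tau$, and the constraints of the MD exponent do not exclude such types.

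You replace this with a first-moment bound on the $W^n$-mass of outputs in $\mathcal{B}_m$ that are hit at a dangerous sparse type, averaged over $m$ before applying Markov and Borel--Cantelli. This gives the weaker but sufficient statement that most of the conditional-type-class mass lies in the MD region. It also makes the averaging over $m$ in $\beta_n$ explicit, which the paper's reduction to $m=1$ leaves implicit.

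Three steps need tightening.

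\emph{Near-boundary sparse types.} Your claim that ``not exponentially small implies bulk'' is not literally true: sparse types with $I_{Q'}(X;Y)\downarrow R$ are neither exponentially rare nor covered by $\Delta(Q^*_Y,R)$. What you need is that the dangerous types, those with $\lambda(Q'_{XY},R)\ge\tau-2\gamma$ where $2\gamma<\eta$, satisfy $I_{Q'}(X;Y)\ge R+\gamma'$ for some fixed $\gamma'>0$. This follows by compactness: a dangerous type with $I_{Q'}(X;Y)\le R$ would force $\Delta(Q^*_Y,R)\ge\tau-2\gamma>\tau-\eta$.

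\emph{Accumulation and near-boundary bulk types.} Outputs that avoid dangerous types can still collect many harmless hits, and your sketch does not bound these. Separately, for bulk types near $I_{Q'}(X;Y)=R$ the mean $\mu_n'$ does not grow exponentially, so $e^{-\mu_n'/3}$ is not doubly exponential there; the paper has the same defect. Both are fixed by the uniform bound $N(Q'_{XY}|y^n)\le e^{n\gamma}\max\{1,\mu_n'\}$ over all $y^n$ and all types. Its failure probability is at most $(e\mu_n'/k)^k$ with $k=e^{n\gamma}\max\{1,\mu_n'\}$, which is doubly exponentially small. On this event, type $Q'_{XY}$ contributes at most $e^{n[\theta(Q^*_Y)+\lambda(Q'_{XY},R)-\tau+\gamma+o(1)]}$ to $S$, so only dangerous types matter.

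\emph{Semicontinuity.} Your worry here is unnecessary. The margin $\eta>0$ is automatic because the feasible set is defined by strict inequalities. Moreover, $\Delta(\cdot,R)$ is upper semicontinuous in $Q_Y$, which is exactly the direction that keeps nearby types feasible.
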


\begin{proof}
The upper bound follows by Markov's inequality and the Borel--Cantelli lemma,
exactly as for the FA exponent. We henceforth focus on the lower bound.

By the symmetry of the random coding mechanism, 
it may be assumed without loss of generality that $x^n(1)$ is the transmitted codeword.
Let $x^n\dfn x^n(1)$ be drawn uniformly from $\calT(P_X)$ independently of
$\calC'\dfn\{x^n(2),\ldots,x^n(M)\}$,
and let $y^n\sim W^n(\cdot|x^n)$.
Define
\begin{equation}
S'(y^n,\calC')\dfn\sum_{m=2}^M W^n(y^n|x^n(m)),
\end{equation}
so that 
\begin{equation}
S(y^n,\calC)=W^n(y^n|x^n)+S'(y^n,\calC').
\end{equation}
All almost-sure statements below are over the random selection of $\calC$.
Pick a feasible type $Q^*_{XY}$ with $Q^*_X=P_X$,
$\lambda(Q^*_{XY},R)<\tau$, $\Delta(Q^*_Y,R)<\tau$, and
$D_{\mbox{\tiny c}}(Q^*_{XY})\leq\EMD^{\mbox{\tiny a}}(\tau,R)+\epsilon/2$.
Such a type exists since the feasible set is non-empty with infimum
$\EMD^{\mbox{\tiny a}}(\tau,R)$.
Set $\delta\dfn\tau-\Delta(Q^*_Y,R)>0$.

For any $y^n\in\calT(Q^*_Y)$, the LLR $\Lambda(y^n,\calC)$ simplifies to:
\begin{equation}
  \Lambda(y^n,\calC)
  =\frac{1}{n}\log S(y^n,\calC)-R+H_{Q^*}(Y)+D_{\mbox{\tiny m}}(Q^*_Y).
\end{equation}
Define
\begin{equation}
  \theta(Q^*_Y) \dfn \tau+R-H_{Q^*}(Y)-D_{\mbox{\tiny m}}(Q^*_Y),
\end{equation}
so that $\Lambda(y^n,\calC)<\tau$ iff $S(y^n,\calC)<e^{n\theta(Q^*_Y)}$.
For $(x^n,y^n)\in\calT(Q^*_{XY})$,
$W^n(y^n|x^n)=e^{-n\ell(Q^*_{XY})}$
where 
\begin{equation}
\ell(Q^*_{XY})\dfn H_{Q^*}(Y|X)+D_{\mbox{\tiny c}}(Q^*_{XY}).
\end{equation}
Using the identity $H_{Q^*}(Y)=H_{Q^*}(Y|X)+I_{Q^*}(X;Y)$, we have
\begin{equation}
\ell(Q^*_{XY})+\theta(Q^*_Y)
=\tau-\lambda(Q^*_{XY},R)+[R-I_{Q^*}(X;Y)]_+>0
\label{eq:theta_plus_ell}
\end{equation}
since $\lambda(Q^*_{XY},R)<\tau$ and $[R-I_{Q^*}(X;Y)]_+\geq 0$.
Hence $W^n(y^n|x^n)<e^{n\theta(Q^*_Y)}$, and for all large $n$:
\begin{equation}
W^n(y^n|x^n) < \tfrac{1}{2}e^{n\theta(Q^*_Y)}.
\label{eq:Wn_lt_theta}
\end{equation}
Next, let us decompose $S'(y^n,\calC')$ by joint types:
\begin{equation}
  S'(y^n,\calC')
	=\sum_{\substack{Q_{XY}':\,Q'_X=P_X \\ Q'_Y=Q^*_Y}}
    N'(Q'_{XY}|y^n)\cdot e^{-n\ell(Q'_{XY})},
\end{equation}
where $N'(Q'_{XY}|y^n)\dfn\#\{m\geq 2:(x^n(m),y^n)\in\calT(Q'_{XY})\}$.
Since $y^n\in\calT(Q^*_Y)$,
only types $Q_{XY}'$ with $Q'_Y=Q^*_Y$ contribute to $S'(y^n,\calC')$.

We next distinguish between sparse and bulk types $Q_{XY}'$.

For sparse types,
$\E\{N'(Q'_{XY}|y^n)\}\doteq e^{-n[I_{Q'}(X;Y)-R]}\to 0$,
so by Markov's inequality and the Borel--Cantelli lemma,
$N'(Q'_{XY}|y^n)=0$ eventually a.s.\ for all large $n$,
simultaneously over all $y^n\in\calT(Q^*_Y)$
and all (polynomially many) sparse $Q_{XY}'$.

For bulk types, 
$N'(Q'_{XY}|y^n)$ is a binomial random variable with $M-1$ trials
and success rate of the exponential order of $e^{-nI_{Q'}(X;Y)}$. Denoting $\mu_n'\dfn\E\{N'(Q'_{XY}|y^n)\}\doteq
e^{n[R-I_{Q'}(X;Y)]}$, the
Chernoff bound yields $\Pr\{N'(Q'_{XY}|y^n)>2\mu_n'\}\leq e^{-\mu_n'/3}$,
which is doubly exponentially small.
Applying the union bound over all $y^n\in\calT(Q^*_Y)$
and all (polynomially many) bulk types $Q_{XY}'$,
the total failure probability is still doubly exponentially small.
By the Borel--Cantelli lemma, eventually almost surely for all large $n$:
\begin{equation}
N'(Q'_{XY}|y^n)\leq 2\mu_n'
\quad\text{for all }y^n\in\calT(Q^*_Y)\text{ and all bulk }Q'.
  \label{eq:Omega_LB}
\end{equation}
Call this a.s.\ event $\Omega$.
On $\Omega$, for any $y^n\in\calT(Q^*_Y)$, we bound $S'(y^n,\calC')$
by summing only over bulk types (sparse types contribute nothing on $\Omega$):
\begin{align}
S'(y^n,\calC')
	&\leq\sum_{\substack{Q_{XY}':\,Q'_Y=Q^*_Y \\ I_{Q'}(X;Y)\leq R}}
2\mu_n'\cdot e^{-n\ell(Q'_{XY})}
\notag\\
	&=2\sum_{Q_{XY}'} e^{n[R-I_{Q'}(X;Y)]}\cdot
  e^{-n[H_{Q'}(Y|X)+D_{\mbox{\tiny c}}(Q'_{XY})]}
\notag\\
	&=2\sum_{Q_{XY}'} e^{n[R-H_{Q^*}(Y)-D_{\mbox{\tiny c}}(Q'_{XY})]}
\label{eq:Sprime_step}\\
	&=2\sum_{Q_{XY}'} e^{n[\theta(Q^*_Y)+\lambda(Q'_{XY},R)-\tau]}
\label{eq:Sprime_identity}\\
&\leq 2\,\mathrm{poly}(n)\cdot e^{n\theta(Q^*_Y)}\cdot e^{-n\delta}
\notag\\
&<\tfrac{1}{2}e^{n\theta(Q^*_Y)}
\label{eq:Sprime_ub}
\end{align}
for all large $n$, where $\mathrm{poly}(n)$ denotes the (polynomial) number of types.
Step~\eqref{eq:Sprime_step} uses $I_{Q'}(X;Y)+H_{Q'}(Y|X)=H_{Q'}(Y)=H_{Q^*}(Y)$
since $Q'_Y=Q^*_Y$.
Step~\eqref{eq:Sprime_identity} uses the identity,
valid for all bulk $Q_{XY}'$ with $Q'_Y=Q^*_Y$ (so $[I_{Q'}(X;Y)-R]_+=0$):
\begin{align}
R-H_{Q^*}(Y)-D_{\mbox{\tiny c}}(Q'_{XY})
&=\theta(Q^*_Y)+\lambda(Q'_{XY},R)-\tau,
\label{eq:key_identity}
\end{align}
which follows by substituting
$\theta(Q^*_Y)=\tau+R-H_{Q^*}(Y)-D_{\mbox{\tiny m}}(Q^*_Y)$
and $\lambda(Q'_{XY},R)=D_{\mbox{\tiny m}}(Q^*_Y)-D_{\mbox{\tiny c}}(Q'_{XY})$
(the latter is because $[I_{Q'}(X;Y)-R]_+=0$ for bulk $Q_{XY}'$).
The penultimate step uses
$\lambda(Q'_{XY},R)\leq\Delta(Q^*_Y,R)=\tau-\delta$
for every bulk $Q_{XY}'$ with $Q'_Y=Q^*_Y$, by definition of $\Delta(Q^*_Y,R)$.

It follows that on
$\Omega$, for all large $n$
and all $y^n\in\calT(Q^*_{Y|X}|x^n)$:
\begin{equation}
S(y^n,\calC)
= W^n(y^n|x^n)+S'(y^n,\calC')
<\tfrac{1}{2}e^{n\theta(Q^*_Y)}+\tfrac{1}{2}e^{n\theta(Q^*_Y)}
=e^{n\theta(Q^*_Y)},
\end{equation}
and so, $\Lambda(y^n,\calC)<\tau$ for every such $y^n$.
Therefore:
\begin{eqnarray}
\beta_n(\tau,R,\calC)
&\geq& W^n\!\bigl(\calT(Q^*_{Y|X}|x^n)\bigm|x^n\bigr)
\notag\\
&\doteq& e^{-nD_{\mbox{\tiny c}}(Q^*_{XY})}
\notag\\
&\geq& e^{-n[\EMD^{\mbox{\tiny a}}(\tau,R)+\epsilon]}
\quad\text{a.s.,}
\end{eqnarray}
where the second line uses the method of types
and the third uses $D_{\mbox{\tiny c}}(Q^*_{XY})\leq\EMD^{\mbox{\tiny a}}(\tau,R)+\epsilon/2$.
Combining both bounds and letting $\epsilon\to 0$:
$\frac{1}{n}\log\beta_n(\tau,R,\calC)\to-\EMD^{\mbox{\tiny a}}(\tau,R)$ a.s.
\end{proof}

We have the following analogous corollary for the quenched MD exponent.

\begin{corollary}
\label{cor:MD}
\begin{equation}
  -\frac{1}{n}\E\bigl\{\log\beta_n(\tau,R,\calC)\bigr\}
  \;\xrightarrow{n\to\infty}\;\EMD^{\mbox{\tiny a}}(\tau,R),
\end{equation}
or, equivalently,
\begin{equation}
\EMD^{\mbox{\tiny q}}(\tau,R)=
\EMD^{\mbox{\tiny a}}(\tau,R).
\end{equation}
\end{corollary}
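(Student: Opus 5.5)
The argument mirrors the proof of Corollary~\ref{cor:FA}, with Theorem~\ref{thm:MD} taking the place of Theorem~\ref{thm:FA}. Define
\begin{equation}
V_n\dfn-\frac{1}{n}\log\beta_n(\tau,R,\calC).
\end{equation}
Since $\beta_n(\tau,R,\calC)\le 1$ is a probability, $V_n\geq 0$. Here we take the convention $V_n=+\infty$ if $\beta_n=0$; this cannot happen for $\tau$ such that $\EMD^{\mbox{\tiny a}}(\tau,R)<\infty$, because the lower bound in the proof of Theorem~\ref{thm:MD} shows $\beta_n>0$ eventually a.s.

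By Theorem~\ref{thm:MD}, $V_n\to\EMD^{\mbox{\tiny a}}(\tau,R)$ almost surely. Fatou's lemma applies to nonnegative variables, and gives
\begin{equation}
\liminf_{n\to\infty}\E\{V_n\}\geq\E\Bigl\{\liminf_{n\to\infty}V_n\Bigr\}=\EMD^{\mbox{\tiny a}}(\tau,R).
\end{equation}
For the reverse inequality we use the concavity of the logarithm (Jensen's inequality):
\begin{equation}
\E\{V_n\}\leq-\frac{1}{n}\log\E\{\beta_n(\tau,R,\calC)\}.
\end{equation}
The right-hand side converges to $\EMD^{\mbox{\tiny a}}(\tau,R)$ by the definition (\ref{eq:ann_MD_def}), whose limit exists by \cite{MerhavCompanion}. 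Hence $\limsup_{n\to\infty}\E\{V_n\}\leq\EMD^{\mbox{\tiny a}}(\tau,R)$. Combining the two bounds, $\lim_{n\to\infty}\E\{V_n\}$ exists and equals $\EMD^{\mbox{\tiny a}}(\tau,R)$. This is exactly the claim $\EMD^{\mbox{\tiny q}}(\tau,R)=\EMD^{\mbox{\tiny a}}(\tau,R)$.

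The only delicate point is that the expectation $\E\{V_n\}$ must be well defined, which requires ruling out $\beta_n=0$ on a set of codebooks of positive probability. In the nontrivial regime this is automatic, since Jensen's inequality already bounds $\E\{V_n\}$ by the finite annealed quantity. If $\EMD^{\mbox{\tiny a}}(\tau,R)=\infty$ (the feasible set is empty), both sides are $+\infty$ and the identity holds trivially. Apart from this, no new estimates are needed beyond Theorem~\ref{thm:MD} and the annealed result of \cite{MerhavCompanion}.
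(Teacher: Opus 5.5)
\paragraph{Gap: the Jensen step is reversed.}
You take the paper's route exactly: its proof defers to the Fatou--Jensen argument of Corollary~\ref{cor:FA}. However, the Jensen step points the wrong way, both in your write-up and in that argument. Concavity of the logarithm gives $\E\{\log\beta_n\}\le\log\E\{\beta_n\}$, hence
\[
\E\{V_n\}=-\frac{1}{n}\E\{\log\beta_n(\tau,R,\calC)\}\;\ge\;-\frac{1}{n}\log\E\{\beta_n(\tau,R,\calC)\}.
\]
This is just the introduction's remark that quenched exponents cannot be smaller than annealed ones. So Fatou and Jensen both give $\liminf_n\E\{V_n\}\ge\EMD^{\mbox{\tiny a}}(\tau,R)$. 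Nothing in your proof gives $\limsup_n\E\{V_n\}\le\EMD^{\mbox{\tiny a}}(\tau,R)$, and that is the actual content of the corollary.

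Almost-sure convergence does not pass to expectations without some uniform integrability, and your treatment of $\beta_n=0$ does not supply it. For fixed $n$ the ensemble is finite and each codebook has probability at least $|\calT(P_X)|^{-M}>0$, so a single codebook with $\beta_n=0$ makes $\E\{V_n\}=+\infty$. The fact that $\beta_n>0$ eventually a.s.\ is compatible with $\Pr\{\beta_n=0\}>0$ for every $n$, since such probabilities can be summable. Your closing appeal to Jensen to guarantee finiteness uses the same reversed inequality.

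\paragraph{What is missing, and one way to supply it.}
If $\beta_n>0$, some pair $(m,y^n)$ with $\Lambda(y^n,\calC)<\tau$ has $W^n(y^n|x^n(m))\ge w^n$, where $w$ is the smallest nonzero entry of $W$. Hence $\beta_n\ge e^{-n[R+\log(1/w)]}$, and $V_n$ takes values in $[0,C]\cup\{+\infty\}$ with $C=R+\log(1/w)$. The corollary then follows from Theorem~\ref{thm:MD} by bounded convergence, once you show that for all large $n$ \emph{no} codebook of composition $P_X$ has $\beta_n=0$. (When $\EMD^{\mbox{\tiny a}}(\tau,R)=\infty$, the correct Jensen direction already gives the claim.) This is a deterministic, worst-case statement about codebooks, which Theorem~\ref{thm:MD} cannot provide.

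When $W$ has no zero entries, one way to get it is a counting argument in a type class $\calT(Q_Y)$ with $\Delta(Q_Y,R)<\tau$; such a $Q_Y$ exists when $\EMD^{\mbox{\tiny a}}(\tau,R)<\infty$.
\begin{itemize}
\item For any fixed-composition codebook, only a vanishing fraction of $\calT(Q_Y)$ shares a sparse joint type with some codeword, because $\sum_{y^n\in\calT(Q_Y)}N(Q'_{XY}|y^n)=M|\calT(Q'_{Y|X}|x^n)|$.
\item The average over $\calT(Q_Y)$ of the bulk part of $S(y^n,\calC)$ is codebook-independent.
\end{itemize}
Together these exhibit, for every codebook, a $y^n$ with $\Lambda(y^n,\calC)\le\Delta(Q_Y,R)+o(1)<\tau$, which is what bounded convergence needs.
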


\begin{proof}
The proof is the same as for Corollary~\ref{cor:FA},
with $U_n\dfn-\frac{1}{n}\log\beta_n(\tau,R,\calC)$ and
$\EFA^{\mbox{\tiny a}}(\tau,R)$ replaced by $\EMD^{\mbox{\tiny a}}(\tau,R)$.
\end{proof}

\begin{remark}
Since Corollaries~\ref{cor:FA} and~\ref{cor:MD} establish that the annealed
and quenched exponents coincide, there is no longer any need to distinguish
between them. Henceforth, we write $\EFA(\tau,R)$ and $\EMD(\tau,R)$
for the common value, dropping the superscript $^{\rm a}$.
\end{remark}

\section{Discussion on the Annealed-Quenched Equivalence}
\label{sec:jensen}

The self-averaging results of Sections~\ref{sec:FA} and \ref{sec:MD}
tell us that for large $n$, both $\frac{1}{n}\log\alpha_n(\tau,R,\calC)$ and
$\frac{1}{n}\log\beta_n(\tau,R,\calC)$
are essentially constants across the vast majority of codebooks $\calC$:
the \emph{Jensen gap} of both exponents vanishes.
Such a behavior should not be taken for granted, because it is not common to other exponents
associated with random coding, for example, the classical (annealed) random coding error exponent
(see, e.g., \cite{Gallager65}, \cite[Section 5.6]{Gallager68}) 
as opposed to the typical-code (quenched) error exponent, which differ in general at
low coding rates --- see, e.g., \cite{BargForney02}, \cite{merhavtypical} and
references therein. Moreover, the Jensen gap between the annealed and quenched
free energies of disordered systems with random parameters is positive in general at low
temperatures. It is then natural to wonder what it is that
makes the difference between the problem setting at hand and the other 
problem settings in that respect.

The key structural reason is that $S(y^n,\calC)=\sum_{m=1}^M W^n(y^n|x^n(m))$
is a sum of $M=e^{nR}$ \emph{independent} terms for each fixed $y^n$
(independence follows from the independent generation of codewords).
The coefficient of variation therefore decays as $e^{-nR/2}\to 0$,
so $S(y^n,\calC)/\E\{S(y^n,\calC)\}\to 1$ for each $y^n$,
and the self-averaging of $\frac{1}{n}\log\alpha_n(\tau,R,\calC)$ follows.
This is in contrast to spin glasses, where the terms of the partition function
are \emph{correlated} through the shared coupling matrix,
causing large relative fluctuations and a positive Jensen gap;
and to channel coding, where the error probability depends on a
\emph{maximum} rather than a sum, making it sensitive to rare bad codewords.
For $\beta_n(\tau,R,\calC)$, the mechanism is slightly different:
the dominant contribution is the deterministic transmitted-type weight
$W^n(y^n|x^n(1))=e^{-n\ell(Q^*_{XY})}$, and the interferer sum concentrating
\emph{below} the threshold $e^{n\theta(Q^*_Y)}$ is what gives self-averaging.

\section{Mismatched Detection}
\label{sec:mismatch}

So far we have assumed that the detector knows the true channel $W$
and computes the optimal Neyman--Pearson LLR.
We now study what happens when the detector is
\emph{mismatched}: it assumes a wrong channel model $\tilde{W}:\calX\to\calY$
in place of $W$.
This is the practically relevant scenario where the
channel is estimated imperfectly or chosen for computational
convenience rather than accuracy.
The mismatched setting also opens a genuinely new theoretical question:
does self-averaging survive, and how do the exponents degrade?

Let $\tilde{W}:\calX\to\calY$ be the detector's assumed channel, with
output marginal $\tilde{P}_Y(y)\dfn\sum_x P_X(x)\tilde{W}(y|x)$.
The codebook $\calC$ is still drawn from $\calT(P_X)$
under the true channel $W$, and the true hypotheses are as before:
\begin{align*}
\calH_0&:~~y^n\sim P_Y^{\otimes n},\\
\calH_1&:~~y^n\sim P_{Y^n|\calC}
\quad\text{(true mixture output under }W).
\end{align*}
The detector constructs its LLR
assuming the channel $\tilde{W}$:
\begin{equation}
\tilde{\Lambda}(y^n,\calC)
\dfn \frac{1}{n}\log\frac{\sum_{m=1}^M \tilde{W}^n(y^n|x^n(m))}
    {M\cdot \tilde{P}_Y^{\otimes n}(y^n)}.
  \label{eq:mismatch_LLR}
\end{equation}
The mismatched detector decides in favor of $\calH_1$ iff
$\tilde{\Lambda}(y^n,\calC)\geq\tau$, otherwise it decides in favor of
$\calH_0$.
The error probabilities are:
\begin{align}
\tilde{\alpha}_n(\tau,R,\calC)
&\dfn \sum_{\{y^n:~\tilde{\Lambda}(y^n,\calC)\geq\tau\}} P_Y^{\otimes n}(y^n)
\label{eq:alpha_mm}\\
\tilde{\beta}_n(\tau,R,\calC)
&\dfn\frac{1}{M}\sum_{m=1}^M\sum_{\{y^n:~\tilde{\Lambda}(y^n,\calC)<\tau\}}
    W^n(y^n|x^n(m)).
  \label{eq:beta_mm}
\end{align}
Note that $\tilde{\alpha}_n$ and $\tilde{\beta}_n$ are always evaluated under
the true measures $P_Y^{\otimes n}$ and $W^n$, as error probabilities are
defined with respect to the true distribution regardless of the detector's beliefs.
What differs between modeling choices is the hypotheses the detector formulates.
In the model above, the true hypotheses are $\calH_0$ and $\calH_1$ as stated,
and the mismatch lies entirely in the test statistic $\tilde{\Lambda}$
(e.g.\ due to a wrong channel estimate or a computationally convenient approximation).
A detector who \emph{genuinely believes} the channel is $\tilde{W}$ would instead
formulate hypotheses
$\tilde{\calH}_0: y^n\sim\tilde{P}_Y^{\otimes n}$ and
$\tilde{\calH}_1: y^n\sim\frac{1}{M}\sum_m\tilde{W}^n(\cdot|x^n(m))$,
giving the LRT statistic $\tilde{\Lambda}(y^n,\calC)$ a different operational meaning
even though its mathematical form is identical.

For a joint type $Q_{XY}$ with $Q_X=P_X$, define the
mismatched analogues of the quantities in Section~\ref{sec:setup}:
\begin{align}
\tilde{D}_{\mbox{\tiny m}}(Q_Y)
&\dfn D(Q_Y\|\tilde{P}_Y),
\\
\tilde{D}_{\mbox{\tiny c}}(Q_{XY})
&\dfn D(Q_{Y|X}\|\tilde{W}|P_X),
\\
\tilde{\ell}(Q_{XY})
&\dfn H_Q(Y|X)+\tilde{D}_{\mbox{\tiny c}}(Q_{XY}),
\\
\tilde{\lambda}(Q_{XY},R)
&\dfn \tilde{D}_{\mbox{\tiny m}}(Q_Y)-\tilde{D}_{\mbox{\tiny c}}(Q_{XY})+[I_Q(X;Y)-R]_+,
\\
\tilde{\theta}(Q_Y)
&\dfn \tau+R-H_Q(Y)-\tilde{D}_{\mbox{\tiny m}}(Q_Y),
\\
\tilde{\Delta}(Q_Y,R)
&\dfn \max_{\substack{Q'_{XY}:\,Q'_X=P_X,\,Q'_Y=Q_Y\\
I_{Q'}(X;Y)\leq R}}
\tilde{\lambda}(Q'_{XY},R).
\end{align}
Clearly, for $(x^n,y^n)\in\calT(Q_{XY})$,
$\tilde{W}^n(y^n|x^n)=e^{-n\tilde{\ell}(Q_{XY})}$ and
$\tilde{P}_Y^{\otimes n}(y^n)=e^{-n[H_Q(Y)+\tilde{D}_{\mbox{\tiny m}}(Q_Y)]}$,
so the condition $\tilde{\Lambda}(y^n,\calC)<\tau$ is equivalent to
$\tilde{S}(y^n,\calC)<e^{n\tilde{\theta}(Q_Y)}$,
where $\tilde{S}(y^n,\calC)\dfn\sum_m \tilde{W}^n(y^n|x^n(m))$.

The annealed exponents follow from the same derivations
as in~\cite{MerhavCompanion}, replacing $W$ with $\tilde{W}$ in the
test statistic while keeping the true underlying probability measures for the error probabilities.

\begin{theorem}
\label{thm:mm_annealed}
\begin{align}
\tilde{E}_{\rm FA}(\tau,R)
&=\min_{\substack{Q_{XY}:\,Q_X=P_X\\
\tilde{\lambda}(Q_{XY},R)\geq\tau}}
\bigl[D_{\mbox{\tiny m}}(Q_Y)+[I_Q(X;Y)-R]_+\bigr],
\label{eq:EFA_mm}\\
\tilde{E}_{\rm MD}(\tau,R)
&=\min_{\substack{Q_{XY}:\,Q_X=P_X\\
\tilde{\lambda}(Q_{XY},R)<\tau\\
\tilde{\Delta}(Q_Y,R)<\tau}}
D_{\mbox{\tiny c}}(Q_{XY}).
\label{eq:EMD_mm}
\end{align}
\end{theorem}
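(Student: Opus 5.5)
The plan is to redo the annealed analysis of~\cite{MerhavCompanion}, with $\tilde{S}$ and $\tilde{P}_Y$ in the test statistic and the true measures $P_Y^{\otimes n}$ and $W^n$ in the error probabilities. Everything reduces to the method of types.

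\textbf{FA exponent.} Fix $y^n\in\calT(Q_Y)$. Then $P_Y^{\otimes n}(y^n)=e^{-n[H_Q(Y)+D_{\mbox{\tiny m}}(Q_Y)]}$, and $\tilde{\Lambda}(y^n,\calC)\ge\tau$ iff $\tilde{S}(y^n,\calC)\ge e^{n\tilde{\theta}(Q_Y)}$. Write
\[
\tilde{S}(y^n,\calC)=\sum_{Q'_{XY}:\,Q'_Y=Q_Y}N(Q'_{XY}|y^n)\,e^{-n\tilde{\ell}(Q'_{XY})}.
\]
Since there are only polynomially many types, the event $\tilde{S}\ge e^{n\tilde\theta}$ is, up to polynomial factors, the union over $Q'$ of the events $N(Q'|y^n)e^{-n\tilde\ell(Q')}\gtrsim e^{n\tilde\theta}$. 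For a bulk $Q'$, $N$ concentrates at $e^{n[R-I_{Q'}]}$ with doubly exponential tails. The event then has probability $\doteq 1$ iff $R-I_{Q'}-\tilde\ell(Q')\ge\tilde\theta(Q_Y)$, which by the identity $H_Q(Y)=H_{Q'}(Y|X)+I_{Q'}(X;Y)$ is exactly $\tilde\lambda(Q',R)\ge\tau$. For a sparse $Q'$, the dominant way to reach the threshold is a single codeword, with probability $\doteq e^{-n[I_{Q'}-R]}$. The condition $-\tilde\ell(Q')\ge\tilde\theta$ is again $\tilde\lambda(Q',R)\ge\tau$, because the $[\cdot]_+$ term absorbs $I_{Q'}-R$. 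I must also rule out $N\ge 2$ with larger multiplicities. Since $N\le e^{nR}$ and $\Pr\{N\ge e^{ns}\}\doteq e^{-ne^{ns}\cdots}$ is doubly exponential for $s>0$, multiplicities contribute only subexponential gain. Hence
\[
\Pr\{\tilde\Lambda\ge\tau\}\doteq\max_{Q':\,Q'_Y=Q_Y,\ \tilde\lambda(Q',R)\ge\tau}e^{-n[I_{Q'}-R]_+}.
\]
Summing over $Q_Y$, weighted by $P_Y^{\otimes n}(\calT(Q_Y))\doteq e^{-nD_{\mbox{\tiny m}}(Q_Y)}$, and relabelling $Q'$ as $Q$ gives~\eqref{eq:EFA_mm}. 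The only change from the matched case is that $\tilde\lambda$ replaces $\lambda$ in the constraint, while the objective keeps the true $D_{\mbox{\tiny m}}$.

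\textbf{MD exponent.} By symmetry, condition on $x^n(1)$ being sent and on the joint type $Q_{XY}$ of $(x^n(1),y^n)$, which has probability $\doteq e^{-nD_{\mbox{\tiny c}}(Q_{XY})}$ under the true $W$. Split $\tilde S=\tilde W^n(y^n|x^n(1))+\tilde S'$. The event $\tilde\Lambda<\tau$ requires both pieces to be below $e^{n\tilde\theta(Q_Y)}$, up to a factor of $2$.
\begin{itemize}
\item The deterministic term is below the threshold iff $\tilde\ell(Q)+\tilde\theta(Q_Y)>0$. This is the analogue of~\eqref{eq:theta_plus_ell} and amounts to $\tilde\lambda(Q,R)<\tau$, up to the nonnegative $[R-I_Q]_+$ slack. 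I will have to check this carefully, since for bulk $Q$ the condition from the first term alone is weaker, but then the $\tilde\Delta$ condition below subsumes it, as in the matched case.
\item The interferer sum $\tilde S'$ stays below the threshold with probability $\doteq 1$ iff every bulk $Q'$ with $Q'_Y=Q_Y$ has $\tilde\lambda(Q',R)<\tau$, i.e.\ $\tilde\Delta(Q_Y,R)<\tau$. This follows from the bulk concentration and the identity~\eqref{eq:key_identity} with tildes.
\end{itemize}
If $\tilde\Delta(Q_Y,R)\ge\tau$, the probability that $\tilde S'$ falls below the threshold is doubly exponentially small, because a bulk count must drop by an exponential factor. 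Sparse interferers only raise $\tilde S'$, so they can only hurt MD events; their effect on the probability of $\tilde S'<e^{n\tilde\theta}$ is to reduce it by at most a factor $\doteq 1$, since with probability $\to1$ they are all absent.

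Combining the two pieces and minimizing $D_{\mbox{\tiny c}}(Q)$ over the feasible set gives~\eqref{eq:EMD_mm}. A continuity argument handles the boundary strict inequalities, exactly as in~\cite{MerhavCompanion}.

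The main obstacle I expect is the upper bound on the FA probability: the sparse-type multi-codeword accumulation has to be controlled uniformly over types. I would handle it with the standard type-enumerator tail bound
\[
\Pr\{N(Q'|y^n)\ge e^{ns}\}\le\exp\{-e^{ns}\,n(I_{Q'}-R-o(1))\}\quad (s>0),
\]
together with $\Pr\{N\ge1\}\le e^{-n[I_{Q'}-R]}$ when $s=0$, and a union bound over polynomially many $(Q',s)$ grid points.
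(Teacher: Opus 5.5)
Your proposal is correct and follows essentially the same route as the paper. Both redo the annealed type-enumerator analysis of~\cite{MerhavCompanion}, putting $\tilde{S}$, $\tilde{\ell}$, $\tilde{\theta}$ in the test statistic while keeping the true weights $e^{-nD_{\mbox{\tiny m}}(Q_Y)}$ and $e^{-nD_{\mbox{\tiny c}}(Q_{XY})}$. The only difference is that you derive the tail behavior of $N(Q'_{XY}|y^n)$ directly (via Chernoff and $\binom{M}{k}p^k$ bounds), where the paper invokes Theorems~4.1 and~4.3 of~\cite{MW25}.
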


\begin{proof}
	The proof follows~\cite[Theorem~1]{MerhavCompanion} 
with the following substitutions throughout:
$W\to\tilde{W}$, $P_Y\to\tilde{P}_Y$,
$\ell(Q_{XY})\to\tilde{\ell}(Q_{XY})$, $\lambda(Q_{XY},R)\to\tilde{\lambda}(Q_{XY},R)$,
$\theta(Q_Y)\to\tilde{\theta}(Q_Y)$, $\Delta(Q_Y,R)\to\tilde{\Delta}(Q_Y,R)$,
and $S(y^n,\calC)\to\tilde{S}(y^n,\calC)
\dfn\sum_m\tilde{W}^n(y^n|x^n(m))$.
The key steps are as follows.

The annealed FA probability is:
\begin{equation*}
\E\{\tilde{\alpha}_n(\tau,R,\calC)\}
=\sum_{y^n}P_Y^{\otimes n}(y^n)\cdot
  \Pr\!\bigl\{\tilde{S}(y^n,\calC)\geq
e^{n\tilde{\theta}(\hat{P}_{y^n})}\bigr\}.
\end{equation*}
For $y^n\in\calT(Q_Y)$:
$P_Y^{\otimes n}(y^n)=e^{-n[H_Q(Y)+D_{\mbox{\tiny m}}(Q_Y)]}$;
$\tilde{W}^n(y^n|x^n(m))=e^{-n\tilde{\ell}(Q_{XY})}$
for $(x^n(m),y^n)\in\calT(Q_{XY})$.
Crucially, the type-class enumerator
$N(Q_{XY}|y^n)\sim\mathrm{Bin}(M,e^{-nI_Q(X;Y)})$
has the same distribution as in the matched case, as
it counts codewords of a given joint type with $y^n$,
which depends only on the codebook, not on $\tilde{W}$.
Applying Theorem~4.1 of~\cite{MW25} to bound
$\Pr\{\tilde{S}(y^n,\calC)\geq e^{n\tilde{\theta}(\hat{P}_{y^n})}\}$
exactly as in~\cite{MerhavCompanion},
the dominant type minimizes the true cost
$D_{\mbox{\tiny m}}(Q_Y)+[I_Q(X;Y)-R]_+$
subject to the mismatched constraint $\tilde{\lambda}(Q_{XY},R)\geq\tau$,
yielding~\eqref{eq:EFA_mm}.

The annealed MD probability is:
\begin{equation*}
\E\{\tilde{\beta}_n(\tau,R,\calC)\}
=\frac{1}{M}\sum_{m=1}^M\sum_{y^n}W^n(y^n|x^n(m))\cdot
\Pr\bigl\{
\tilde{S}'(y^n,\calC')<e^{n\tilde{\theta}(\hat{P}_{y^n})}
\bigr\}.
\end{equation*}
The probability that $(x^n,y^n)$ has joint type $Q_{XY}$
is $\doteq e^{-nD_{\mbox{\tiny c}}(Q_{XY})}$.
The event $\{\tilde{S}(y^n,\calC)<e^{n\tilde{\theta}(\hat{P}_{y^n})}\}$
is analyzed via Theorem~4.3 of~\cite{MW25}
applied to the counts $N'(Q'_{XY}|y^n)$,
with thresholds computed using $\tilde{\ell}(Q_{XY})$ and $\tilde{\theta}(Q_Y)$.
The feasibility condition reduces to
$\tilde{\lambda}(Q_{XY},R)<\tau$ and $\tilde{\Delta}(Q_Y,R)<\tau$,
and the dominant type minimizes the true $D_{\mbox{\tiny c}}(Q_{XY})$
over this mismatched feasible set,
yielding~\eqref{eq:EMD_mm}.
\end{proof}

\noindent
In the FA exponent~\eqref{eq:EFA_mm}:
the \emph{constraint} $\tilde{\lambda}(Q_{XY},R)\geq\tau$
uses the mismatched $\tilde{\lambda}(Q_{XY},R)$ (the test statistic decides when to fire),
while the \emph{cost} $D_{\mbox{\tiny m}}(Q_Y)+[I_Q(X;Y)-R]_+$
uses the true divergence $D_{\mbox{\tiny m}}(Q_Y)$ (the true FA probability weight).
In the MD exponent~\eqref{eq:EMD_mm}:
the constraints use $\tilde{\lambda}(Q_{XY},R)$ and $\tilde{\Delta}(Q_Y,R)$,
while the cost $D_{\mbox{\tiny c}}(Q_{XY})$ uses the true channel.
The detector's beliefs ($\tilde{W}$) control \emph{when} the test fires,
but the true channel ($W$) controls \emph{how costly} each error is.

The next theorem establishes self-averaging under mismatch;
the mismatched counterparts of Corollaries~\ref{cor:FA} and~\ref{cor:MD}
then follow by the same Fatou--Jensen argument.

\begin{theorem}
\label{thm:mm_sa}
For all $\tau\in\mathbb{R}$, $R>0$, and almost every $\calC$:
\begin{align}
\frac{1}{n}\log\tilde{\alpha}_n(\tau,R,\calC)
&\;\xrightarrow{n\to\infty}\; -\tilde{E}_{\rm FA}(\tau,R),
\\
\frac{1}{n}\log\tilde{\beta}_n(\tau,R,\calC)
&\;\xrightarrow{n\to\infty}\; -\tilde{E}_{\rm MD}(\tau,R).
\end{align}
\end{theorem}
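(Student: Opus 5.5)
We follow the proofs of Theorems~\ref{thm:FA} and~\ref{thm:MD} step by step. The mismatch enters only through the test statistic $\tilde{S}(y^n,\calC)$ and the threshold $e^{n\tilde{\theta}(Q_Y)}$. The probability weights, the type-class enumerators $N(Q_{XY}|y^n)$, and their binomial concentration are unchanged.

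\emph{Upper bounds.} For each of the two error probabilities, the a.s.\ upper bound is immediate. By Theorem~\ref{thm:mm_annealed}, $\E\{\tilde{\alpha}_n\}\doteq e^{-n\tilde{E}_{\rm FA}(\tau,R)}$ and $\E\{\tilde{\beta}_n\}\doteq e^{-n\tilde{E}_{\rm MD}(\tau,R)}$. Markov's inequality then gives failure probability $e^{-n\epsilon}$ for the event $\tilde{\alpha}_n\ge e^{-n[\tilde{E}_{\rm FA}-\epsilon]}$, and likewise for $\tilde{\beta}_n$. These probabilities are summable, so Borel--Cantelli applies. This part needs nothing new.

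\emph{FA lower bound.} Choose a near-optimal type $Q^*_{XY}$ with $\tilde{\lambda}(Q^*_{XY},R)>\tau$ and true cost $D_{\mbox{\tiny m}}(Q^*_Y)+[I_{Q^*}(X;Y)-R]_+\le\tilde{E}_{\rm FA}+\epsilon/2$. We split according to whether $Q^*_{XY}$ is sparse or bulk.

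In the sparse case, set $\mathcal{A}_m=\calT(Q^*_{Y|X}|x^n(m))$. A single codeword of joint type $Q^*_{XY}$ with $y^n$ already gives
\[
\tilde{\Lambda}(y^n,\calC)\ge \tilde{D}_{\mbox{\tiny m}}(Q^*_Y)-\tilde{D}_{\mbox{\tiny c}}(Q^*_{XY})+I_{Q^*}(X;Y)-R=\tilde{\lambda}(Q^*_{XY},R)\ge\tau .
\]
This is the same algebra as in \eqref{eq:LLR_sparse}, with $\tilde{\ell}$ and $\tilde{P}_Y$ in place of $\ell$ and $P_Y$. The measure $P_Y^{\otimes n}(\mathcal{A}_m)$ is evaluated under the \emph{true} $P_Y$, so it is again the deterministic quantity $e^{-n[I_{Q^*}+D_{\mbox{\tiny m}}(Q^*_Y)]}$. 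The inclusion--exclusion argument with the Markov/Borel--Cantelli control of the pairwise overlaps carries over verbatim.

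In the bulk case, the Chernoff bound and the union bound over $\calT(Q^*_Y)$ give $N(Q^*_{XY}|y^n)\ge(1-\delta)\mu_n$ for all $y^n\in\calT(Q^*_Y)$ simultaneously, almost surely. On this event, $\tilde{\Lambda}\ge\tilde{\lambda}(Q^*_{XY},R)+o(1)>\tau$ throughout $\calT(Q^*_Y)$. Hence $\tilde{\alpha}_n\ge P_Y^{\otimes n}(\calT(Q^*_Y))\doteq e^{-nD_{\mbox{\tiny m}}(Q^*_Y)}$.

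\emph{MD lower bound.} Pick $Q^*_{XY}$ with $\tilde{\lambda}(Q^*_{XY},R)<\tau$, $\tilde{\Delta}(Q^*_Y,R)<\tau$ and $D_{\mbox{\tiny c}}(Q^*_{XY})\le\tilde{E}_{\rm MD}+\epsilon/2$, and set $\delta=\tau-\tilde{\Delta}(Q^*_Y,R)$.

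First, the identity \eqref{eq:theta_plus_ell} with tildes gives $\tilde{\ell}(Q^*_{XY})+\tilde{\theta}(Q^*_Y)>0$, so the transmitted codeword contributes $\tilde{W}^n(y^n|x^n)<\frac12 e^{n\tilde{\theta}(Q^*_Y)}$.

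Second, consider the interferers. The events "no sparse-type codewords" and "$N'(Q'_{XY}|y^n)\le2\mu'_n$ for all bulk $Q'$ and all $y^n\in\calT(Q^*_Y)$" concern only the codebook. They therefore hold almost surely exactly as before. On them,
\[
\tilde{S}'(y^n,\calC')\le 2\sum_{Q'} e^{n[R-H_{Q^*}(Y)-\tilde{D}_{\mbox{\tiny c}}(Q'_{XY})]}=2\sum_{Q'}e^{n[\tilde{\theta}(Q^*_Y)+\tilde{\lambda}(Q'_{XY},R)-\tau]}\le \mathrm{poly}(n)e^{n[\tilde{\theta}(Q^*_Y)-\delta]}.
\]
The middle equality is the tilde version of identity \eqref{eq:key_identity}, which holds because $\tilde{\theta}$ and $\tilde{\lambda}$ both use $\tilde{D}_{\mbox{\tiny m}}$ and that term cancels. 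The final inequality uses $\tilde{\lambda}(Q'_{XY},R)\le\tilde{\Delta}(Q^*_Y,R)=\tau-\delta$ for every bulk $Q'$. Combining the two contributions, $\tilde{\Lambda}<\tau$ on all of $\calT(Q^*_{Y|X}|x^n)$. Hence $\tilde{\beta}_n\ge W^n(\calT(Q^*_{Y|X}|x^n)|x^n)\doteq e^{-nD_{\mbox{\tiny c}}(Q^*_{XY})}$, where the true channel $W$ supplies the weight.

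\emph{Main obstacle.} The delicate point is the choice of near-optimal types with \emph{strict} constraints ($\tilde{\lambda}>\tau$ for FA, and the strict inequalities for MD) that are also realizable as $n$-types. This requires checking that the minima in Theorem~\ref{thm:mm_annealed} are approached from the interior of the feasible sets. The tilde functionals are continuous wherever $\tilde{W}$ has full support, and $I_Q(X;Y)$ is continuous, so I would argue by a perturbation toward $P_X\times\tilde{W}$-like types, as in the matched proof. If $\tilde{W}$ has zeros, I would restrict attention to types absolutely continuous w.r.t.\ $\tilde{W}$, since other types give $\tilde{\Lambda}=-\infty$ contributions that vanish. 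Everything else is a mechanical substitution of tildes into the matched proofs.
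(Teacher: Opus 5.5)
Your route is the paper's route. The paper proves this theorem by observing that the arguments of Theorems~\ref{thm:FA} and~\ref{thm:MD} go through with $W$ replaced by $\tilde{W}$ in the statistic, while the true measures are kept as weights. Your tilde versions of \eqref{eq:theta_plus_ell} and \eqref{eq:key_identity} are correct.

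\textbf{The gap.} Your MD lower bound inherits a step from the proof of Theorem~\ref{thm:MD} that fails. It is not true that, almost surely, no codeword $x^n(m)$, $m\geq 2$, has a sparse joint type $Q'_{XY}$ with any $y^n\in\calT(Q^*_Y)$.
\begin{itemize}
\item Markov and Borel--Cantelli control $N'(Q'_{XY}|y^n)$ only for one fixed $y^n$. The union over $\calT(Q^*_Y)$ costs a factor $e^{nH_{Q^*}(Y)}$, which swamps $e^{-n[I_{Q'}(X;Y)-R]}$.
\item In fact the event fails for \emph{every} codebook whenever such a sparse $n$-type exists: each $y^n\in\calT(Q'_{Y|X}|x^n(2))$ lies in $\calT(Q^*_Y)$ and has $N'(Q'_{XY}|y^n)\geq 1$.
\item Restricting to $y^n\in\calT(Q^*_{Y|X}|x^n)$ does not rescue it, since the union bound is then $e^{n[H_{Q^*}(Y|X)-I_{Q'}(X;Y)+R]}$.
\end{itemize}
So your conclusion that $\tilde{\Lambda}<\tau$ on \emph{all} of $\calT(Q^*_{Y|X}|x^n)$ is unjustified, and false in general. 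A milder relative: $\Pr\{N'>2\mu'_n\}\leq e^{-\mu'_n/3}$ beats the union over $\calT(Q^*_Y)$ only when $R-I_{Q'}(X;Y)$ is bounded away from zero.

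\textbf{A repair.} Replace uniformity in $y^n$ by an average. Fix $0<\eta<\delta/4$.
\begin{enumerate}
\item Almost surely, for all large $n$, $N(Q'_{XY}|y^n)\leq\max\{2\mu'_n,e^{n\eta}\}$ for all types and all $y^n\in\calT(Q^*_Y)$. Each failure is doubly exponentially unlikely, and counting over the full codebook covers every transmitted index.
\item On this event, every $Q'_{XY}$ (bulk or sparse) with $\tilde{\lambda}(Q'_{XY},R)\leq\tau-\delta/2$ contributes at most $e^{n[\tilde{\theta}(Q^*_Y)-\delta/4+o(1)]}$ to $\tilde{S}'$.
\item The remaining dangerous types are sparse with $\tilde{\lambda}(Q'_{XY},R)>\tau-\delta/2$. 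If $I_{Q'}(X;Y)\leq R+\eta_0$, then $\tilde{\lambda}(Q'_{XY},R)\leq\tilde{\Delta}(Q^*_Y,R+\eta_0)+\eta_0$. Since $\tilde{\Delta}(Q^*_Y,\cdot)$ is right-continuous at $R$ (by compactness and lower semicontinuity of $\tilde{D}_{\mbox{\tiny c}}$), dangerous types satisfy $I_{Q'}(X;Y)\geq R+\eta_0$ for some $\eta_0>0$.
\item For each $m$ and each fixed $y^n\in\calT(Q^*_{Y|X}|x^n(m))$, the probability that another codeword has a dangerous joint type with $y^n$ is at most $\mathrm{poly}(n)e^{-n\eta_0}$. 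Hence the expected spoiled fraction of the $W^n$-mass of these conditional type classes, averaged over $m$, is at most $\mathrm{poly}(n)e^{-n\eta_0}$.
\item Markov and Borel--Cantelli then show that almost surely at least half of that mass survives. This gives $\tilde{\beta}_n\geq\frac{1}{2}W^n(\calT(Q^*_{Y|X}|x^n)|x^n)\doteq e^{-nD_{\mbox{\tiny c}}(Q^*_{XY})}$.
\end{enumerate}
Averaging over $m$ also replaces the ``without loss of generality $x^n(1)$'' step, which is not automatic for a quenched statement.

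\textbf{On the obstacle you flagged.} The paper simply asserts the interior approximation, so you are right that it needs an argument. However, your perturbation toward $P_X\otimes\tilde{W}$ need not increase $\tilde{\lambda}$, since it can lower $[I_Q(X;Y)-R]_+$. At thresholds such as $\tau=\max_{Q_{XY}}\tilde{\lambda}(Q_{XY},R)$ the strict feasible set is empty. That step therefore needs either a separate argument or a restriction on $\tau$.
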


\begin{proof}
The proofs of Theorems~\ref{thm:FA} and~\ref{thm:MD}
depend on the codebook only through:
(a) the independence of codewords,
(b) the type-class structure ($W^n(y^n|x^n)=e^{-n\ell(Q_{XY})}$
for $(x^n,y^n)\in\calT(Q_{XY})$), and
(c) linearity of expectation over the codebook.
None of these depend on whether the test uses $W$ or $\tilde{W}$.
Replacing $W$ by $\tilde{W}$ throughout
(i.e.\ $S(y^n,\calC)\to\tilde{S}(y^n,\calC)$, $\ell(Q_{XY})\to\tilde{\ell}(Q_{XY})$, $\lambda(Q_{XY},R)\to\tilde{\lambda}(Q_{XY},R)$,
$\theta(Q_Y)\to\tilde{\theta}(Q_Y)$, $\Delta(Q_Y,R)\to\tilde{\Delta}(Q_Y,R)$)
yields the mismatched annealed exponents as the limits,
while all concentration arguments go through unchanged.
\end{proof}

Self-averaging is therefore a structural property of
the random coding mechanism, robust to detector mismatch.
However, while the Jensen gap remains zero,
the exponent \emph{values} change with $\tilde{W}$ and can degrade severely.

The more interesting part of this section concerns the phase structure and the
mismatched critical rate.
The central question is the following: for which rates $R$ does a tradeoff interval
between the two positive exponents
exist under the mismatched detector?
Define:
\begin{equation}
\tau^*(R)\dfn\tilde{\lambda}(P_X\otimes W,R)
=D(P_Y\|\tilde{P}_Y)-D(W\|\tilde{W}|P_X)+[I(X;Y)-R]_+,
\label{eq:tau_star_mm}
\end{equation}
the value of $\tilde{\lambda}(Q_{XY},R)$ at the true channel type $Q_{XY}^*=P_X\otimes W$.
In particular, at $R=I(X;Y)$:
\begin{equation}
\tau^*\dfn\tau^*(I(X;Y))=D(P_Y\|\tilde{P}_Y)-D(W\|\tilde{W}|P_X)\leq 0,
\label{eq:tau_star}
\end{equation}
with $\tau^*=0$ iff $\tilde{W}=W$ (by the data processing inequality).
By the same inequality, $\tau^*(R)\leq[I(X;Y)-R]_+$ for all $R$,
with equality in particular when $\tilde{W}=W$.

\begin{proposition}
\label{prop:mm_phase}
\begin{enumerate}[label=(\roman*)]
\item The mismatched critical rate, the infimum of rates above which
no tradeoff interval exists, is:
\begin{equation}
R_{\mbox{\tiny c}} = \inf_{\substack{Q_{XY}:\,Q_X=P_X \\ Q_Y=P_Y}}
\max\bigl\{I_Q(X;Y),\;
I(X;Y)-D(W\|\tilde{W}|P_X)+\tilde{D}_{\mbox{\tiny c}}(Q_{XY})\bigr\}.
\label{eq:Rmm_exact}
\end{equation}
In the matched case, $R_{\mbox{\tiny c}}=I(X;Y)$,
which is attained by $Q_{XY}=P_X\otimes W$.

\item For $R<R_{\mbox{\tiny c}}$, the tradeoff interval
$(\tilde{\Delta}(P_Y,R),\tau^*(R))$ is non-empty (this is precisely the content
of part~(i)), and both exponents are simultaneously positive for
$\tau\in(\tilde{\Delta}(P_Y,R),\tau^*(R))$.
\end{enumerate}
\end{proposition}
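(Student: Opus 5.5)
The plan is to reduce both parts to one scalar comparison. Using the single-letter formulas of Theorem~\ref{thm:mm_annealed}, I will characterize exactly when each exponent vanishes, and then read off $R_{\mbox{\tiny c}}$ by substituting $Q_Y=P_Y$ into $\tilde{\lambda}$. Theorem~\ref{thm:mm_sa} guarantees these are also the typical-code exponents, so nothing further is needed on that side.

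\emph{Step 1: zero set of $\tilde{E}_{\rm FA}$.} The cost $D_{\mbox{\tiny m}}(Q_Y)+[I_Q(X;Y)-R]_+$ vanishes iff $Q_Y=P_Y$ and $I_Q(X;Y)\le R$. Hence $\tilde{E}_{\rm FA}(\tau,R)=0$ iff some $Q_{XY}$ with $Q_X=P_X$, $Q_Y=P_Y$ and $I_Q(X;Y)\le R$ satisfies $\tilde{\lambda}(Q_{XY},R)\ge\tau$, that is, iff $\tau\le\tilde{\Delta}(P_Y,R)$.

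The converse direction (positivity for $\tau>\tilde{\Delta}(P_Y,R)$) needs a compactness argument. The feasible set $\{\tilde{\lambda}\ge\tau\}$ is closed in the simplex and the cost is continuous, so the minimum is attained. It cannot be attained at a zero of the cost, since any such point would give $\tilde{\Delta}(P_Y,R)\ge\tau$. (If $\tilde{W}$ has zeros where $W$ does not, $\tilde{D}_{\mbox{\tiny c}}$ may be infinite; I will treat that via lower semicontinuity and note it explicitly.)

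\emph{Step 2: zero set of $\tilde{E}_{\rm MD}$.} The cost $D_{\mbox{\tiny c}}(Q_{XY})$ vanishes only at $Q_{XY}=P_X\otimes W$. Therefore $\tilde{E}_{\rm MD}=0$ iff $P_X\otimes W$ is feasible, i.e.\ iff $\tau^*(R)<\tau$ and $\tilde{\Delta}(P_Y,R)<\tau$.

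Positivity otherwise again follows from attainment. If the feasible set is empty the exponent is $+\infty$. If it is nonempty but not closed, I will argue that its closure does not contain $P_X\otimes W$ when $\tau\le\tau^*(R)$: this point would violate $\tilde{\lambda}<\tau$ in the limit, using continuity of $\tilde{\lambda}$ near $P_X\otimes W$. The strict-versus-nonstrict boundary at $\tau=\tau^*(R)$ is the step I expect to be the main technical nuisance. Its only effect is whether the right endpoint of the interval is included, so I will state the interval as open, as in part~(ii).

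Combining Steps 1 and 2, both exponents are positive exactly when $\tilde{\Delta}(P_Y,R)<\tau\le\tau^*(R)$. A tradeoff interval therefore exists iff $\tilde{\Delta}(P_Y,R)<\tau^*(R)$, and on the open interval in part~(ii) both exponents are positive.

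\emph{Step 3: computing the critical rate.} For $Q_Y=P_Y$ and $I_Q(X;Y)\le R$ we have
\[
\tilde{\lambda}(Q_{XY},R)=D(P_Y\|\tilde{P}_Y)-\tilde{D}_{\mbox{\tiny c}}(Q_{XY}).
\]
So $\tilde{\Delta}(P_Y,R)<\tau^*(R)$ becomes
\[
\tilde{D}_{\mbox{\tiny c}}(Q_{XY})>D(W\|\tilde{W}|P_X)-[I(X;Y)-R]_+
\]
for every $Q_{XY}$ with $Q_X=P_X$, $Q_Y=P_Y$ and $I_Q(X;Y)\le R$.

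If $R\ge I(X;Y)$, the choice $Q_{XY}=P_X\otimes W$ violates this, so no interval exists. If $R<I(X;Y)$, the condition reads: for every $Q_{XY}$ with $Q_Y=P_Y$, either $I_Q(X;Y)>R$ or $R<I(X;Y)-D(W\|\tilde{W}|P_X)+\tilde{D}_{\mbox{\tiny c}}(Q_{XY})$. Equivalently, the maximum in \eqref{eq:Rmm_exact} exceeds $R$ for every such $Q_{XY}$.

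This condition is monotone in $R$: it shrinks as $R$ grows, because both the constraint set and the admissible values of $R$ move the same way. The case $R\ge I(X;Y)$ is consistent with it, since $P_X\otimes W$ gives $\max\{I,I\}=I(X;Y)$. The threshold is therefore the infimum \eqref{eq:Rmm_exact}, where compactness of the set $\{Q_{XY}:~Q_X=P_X,~Q_Y=P_Y\}$ and lower semicontinuity of the max handle attainment at the boundary.

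In the matched case, $\tilde{D}_{\mbox{\tiny c}}=D_{\mbox{\tiny c}}\ge 0$, so the max is at least $I(X;Y)$, with equality at $P_X\otimes W$. This gives $R_{\mbox{\tiny c}}=I(X;Y)$.
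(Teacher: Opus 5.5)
Your proposal is correct and follows essentially the same route as the paper: reduce existence of the interval to $\tilde{\Delta}(P_Y,R)<\tau^*(R)$, specialize $\tilde{\lambda}$ to bulk types with $Q_Y=P_Y$, turn the resulting ``for all $Q_{XY}$'' condition into the infimum of the max in \eqref{eq:Rmm_exact}, and read off the matched case at $P_X\otimes W$. Your Steps 1--2 and the explicit $R\ge I(X;Y)$ case fill in what the paper only cites as ``the zero-region characterizations.'' One correction: your closure argument fails at $\tau=\tau^*(R)$ itself, because there $P_X\otimes W$ typically lies in the closure of $\{\tilde{\lambda}<\tau\}$, so the MD exponent (an infimum over a strict-inequality set) is typically zero. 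As you anticipate, this affects only the endpoint and not the open interval in part~(ii) or the criterion $\tilde{\Delta}(P_Y,R)<\tau^*(R)$.
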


\begin{proof}
Let $Q_{XY}^*\dfn P_X\otimes W$, so that $D_{\mbox{\tiny m}}(Q_{Y}^*)=0$,
$D_{\mbox{\tiny c}}(Q_{XY}^*)=0$, and $\tilde{\lambda}(Q_{XY}^*,R)=\tau^*(R)$.
The tradeoff interval $(\tilde{\Delta}(P_Y,R),\tau^*(R))$ is non-empty iff
$\tilde{\Delta}(P_Y,R)<\tau^*(R)$.
For bulk $Q_{XY}'$ with $Q'_Y=P_Y$ and $I_{Q'}(X;Y)\leq R$:
\begin{equation}
\tilde{\lambda}(Q'_{XY},R)
=D(P_Y\|\tilde{P}_Y) - \tilde{D}_{\mbox{\tiny c}}(Q'_{XY}),
\end{equation}
so $\tilde{\Delta}(P_Y,R)=D(P_Y\|\tilde{P}_Y)
-\min_{\{Q_{XY}':Q'_Y=P_Y,I_{Q'}(X;Y)\leq R\}}\tilde{D}_{\mbox{\tiny
c}}(Q_{XY}')$.
And $\tau^*(R)=D(P_Y\|\tilde{P}_Y)-D(W\|\tilde{W}|P_X)+(I(X;Y)-R)$
(using $[I(X;Y)-R]_+=I(X;Y)-R$ since $R<I(X;Y)$ in this case).
The condition $\tilde{\Delta}(P_Y,R)<\tau^*(R)$ becomes the following: for all $Q_{XY}'$ with
$Q'_Y=P_Y$ and $I_{Q'}(X;Y)\leq R$,
$R < I(X;Y)-D(W\|\tilde{W}|P_X)+\tilde{D}_{\mbox{\tiny c}}(Q_{XY}')$.
Combined with $R\geq I_{Q'}(X;Y)$, the tradeoff interval first disappears
when there exists $Q_{XY}'$ with $Q'_Y=P_Y$ such that
\begin{equation}
  R \geq \max\bigl\{I_{Q'}(X;Y),\;
    I(X;Y)-D(W\|\tilde{W}|P_X)+\tilde{D}_{\mbox{\tiny c}}(Q'_{XY})\bigr\}.
\end{equation}
Taking the infimum over all such $Q_{XY}'$ gives~\eqref{eq:Rmm_exact}.
For $\tilde{W}=W$: $Q_{XY}'=Q_{XY}^*$ gives $I(X;Y)$,
and any other $Q_{XY}'$ with $Q'_Y=P_Y$ has $D_{\mbox{\tiny c}}(Q_{XY}')>0$, giving a larger value.
Part (ii) follows from the zero-region characterizations
of $\tilde{E}_{\rm FA}(\tau,R)$ and $\tilde{E}_{\rm MD}(\tau,R)$.
\end{proof}

\subsection{Discussion}

Taking $Q_{XY}'=Q_{XY}^*$ in~\eqref{eq:Rmm_exact} gives $R_{\mbox{\tiny c}}\leq I(X;Y)$.
The data processing inequality gives $\tilde{D}_{\mbox{\tiny c}}(Q_{XY})\geq D(P_Y\|\tilde{P}_Y)$
for all $Q_{XY}$ with $Q_Y=P_Y$, so from~\eqref{eq:Rmm_exact}:
\begin{equation}
R_{\mbox{\tiny c}}
\geq I(X;Y)-D(W\|\tilde{W}|P_X)+D(P_Y\|\tilde{P}_Y)
\dfn I_{\rm GMI}^{(1)},
\label{eq:critical_rate_bounds}
\end{equation}
where $I_{\rm GMI}^{(1)}$ is the generalized mutual information (GMI)
of mismatched decoding evaluated at
$s=1$~\cite{KaplanShamai93,ScarlettGuillen14}, and
its connection to the full GMI is discussed below.
Thus, $I_{\rm GMI}^{(1)}\leq R_{\mbox{\tiny c}}\leq I(X;Y)$ (the same interval
that contains the LM capacity in mismatched channel decoding),
and equivalently,~\eqref{eq:Rmm_exact} can be written as:
\begin{equation}
  R_{\mbox{\tiny c}} = \inf_{\substack{Q_{XY}:\,Q_X=P_X \\ Q_Y=P_Y}}
    \max\bigl\{I_{Q}(X;Y),\;
      I_{\rm GMI}^{(1)} + \tilde{D}_{\mbox{\tiny c}}(Q_{XY}) - D(P_Y\|\tilde{P}_Y)\bigr\}.
  \label{eq:Rmm_GMI}
\end{equation}

The matched soft-covering point $(\tau=0,\,R=I(X;Y))$
shifts to $(\tau^*,\,R_{\mbox{\tiny c}})$ under mismatch:
the threshold shifts from $0$ to $\tau^*\leq 0$,
and the critical rate shifts from $I(X;Y)$ to $R_{\mbox{\tiny c}}\leq I(X;Y)$.
The threshold shift $|\tau^*|=I(X;Y)-I_{\rm GMI}^{(1)}$
equals the GMI capacity loss (see below).
Mismatch can only lower $R_{\mbox{\tiny c}}$, never raise it above $I(X;Y)$,
because for $R\geq I(X;Y)$ the true-type codewords
($N(Q^*_{XY}|y^n)\doteq e^{n(R-I(X;Y))}\geq 1$, where $Q^*_{XY}=P_X\otimes W$)
push $\tilde{S}(y^n,\calC)$ past the threshold on their own,
regardless of the weights $e^{-n\tilde{\ell}(Q_{XY})}$ of other types.

It is interesting to examine two extreme examples of mismatch.
The first is a degenerate mismatched channel $\tilde{W}(y|x)=V(y)$,
which gives $\tilde{\Lambda}(y^n,\calC)\equiv 0$ identically
(the mismatched sum $\tilde{S}(y^n,\calC)=M\cdot V^{\otimes n}(y^n)$
cancels the denominator exactly).
For $\tau>0$: $\tilde{\alpha}_n=0$ and $\tilde{\beta}_n=1$;
for $\tau<0$: $\tilde{\alpha}_n=1$ and $\tilde{\beta}_n=0$.
In neither case are both exponents simultaneously positive,
so $R_{\mbox{\tiny c}}=0$: the collapse occurs at every $R>0$,
the worst possible case.
The second example is a BSC with the wrong polarity,
where $W=\mathrm{BSC}(p)$,
$\tilde{W}=\mathrm{BSC}(1-p)$, and $P_X=(\frac{1}{2},\frac{1}{2})$.
Since $\tilde{W}(y|x)=W(y|\bar{x})$, flipping every codeword bit maps the
mismatched sum to the matched sum: $\tilde{S}(y^n,\calC)\overset{d}{=}S(y^n,\calC)$
over balanced binary codebooks, because $\overline{\calC}\overset{d}{=}\calC$
under the uniform input distribution.
Consequently the mismatched and matched exponents are identical,
and $R_{\mbox{\tiny c}}=I(X;Y)$,
despite $D(W\|\tilde{W}|P_X)$ being large and $I_{\rm GMI}^{(1)}\ll I(X;Y)$.
The test statistic carries no less information about $y^n$ under $\tilde{W}$
than under $W$; the wrong-polarity assumption is invisible to the LRT.
This shows that the lower bound $R_{\mbox{\tiny c}}\geq I_{\rm GMI}^{(1)}$
can be far from tight:
codebook symmetry can make a severely mismatched detector
perform exactly as well as the matched one.

The quantity $I_{\rm GMI}^{(1)}$ defined in~\eqref{eq:critical_rate_bounds}
can be written as:
\begin{equation}
  I_{\rm GMI}^{(1)}
  = \sum_{x,y}P_X(x)W(y|x)\log\frac{\tilde{W}(y|x)}{\tilde{P}_Y(y)},
  \label{eq:GMI1}
\end{equation}
i.e.\ the mutual information computed with $\tilde{W}$ in the numerator
and $\tilde{P}_Y$ in the denominator, averaged under the true $P_X\otimes W$.
This is the $s=1$ special case of the full GMI~\cite{KaplanShamai93}:
\begin{equation}
  I_{\rm GMI}
  = \max_{s\geq 0}\sum_{x,y}P_X(x)W(y|x)
    \log\frac{\tilde{W}(y|x)^s}{\sum_{x'}P_X(x')\tilde{W}(y|x')^s},
  \label{eq:GMI}
\end{equation}
which satisfies $I_{\rm GMI}\geq I_{\rm GMI}^{(1)}$ and $I_{\rm GMI}\geq 0$ always,
while $I_{\rm GMI}^{(1)}$ can be negative for severely mismatched $\tilde{W}$.
Note that $I_{\rm GMI}^{(1)}\geq 0$ iff $D(W\|\tilde{W}|P_X)\leq I(X;Y)+D(P_Y\|\tilde{P}_Y)$,
i.e.\ iff the mismatch is not too severe.
The lower bound $R_{\mbox{\tiny c}}\geq I_{\rm GMI}^{(1)}$ in~\eqref{eq:critical_rate_bounds}
is therefore informative only when $I_{\rm GMI}^{(1)}>0$;
when $I_{\rm GMI}^{(1)}\leq 0$ the bound is trivial since $R_{\mbox{\tiny c}}\geq 0$ always.
The identity $I_{\rm GMI}^{(1)}=I(X;Y)-D(W\|\tilde{W}|P_X)+D(P_Y\|\tilde{P}_Y)$
follows from~\eqref{eq:GMI1} by writing
$$\log\frac{\tilde{W}(y|x)}{\tilde{P}_Y(y)}
=\log\frac{W(y|x)}{P_Y(y)}-\log\frac{W(y|x)}{\tilde{W}(y|x)}+\log\frac{P_Y(y)}{\tilde{P}_Y(y)}$$
and taking expectations under $P_X\otimes W$.
Therefore $|\tau^*|=D(W\|\tilde{W}|P_X)-D(P_Y\|\tilde{P}_Y)=I(X;Y)-I_{\rm GMI}^{(1)}$:
the soft-covering threshold shift equals the $s=1$ GMI capacity loss.

\section{Conclusion}
\label{sec:conclusion}

We have proved that both the FA and MD error exponents
for the soft-covering hypothesis testing problem are
\emph{self-averaging}:
for almost every random codebook $\calC$,
both $\frac{1}{n}\log\alpha_n(\calC)\to-\EFA(\tau,R)$
and $\frac{1}{n}\log\beta_n(\calC)\to-\EMD(\tau,R)$.
The proofs use elementary tools: Markov's inequality
(for both upper bounds), the Chernoff bound and
Borel--Cantelli (for the lower bounds).
The results confirm that the annealed exponents of~\cite{MerhavCompanion}
describe the typical codebook behavior,
not just the ensemble average.

We then extended the analysis to mismatched detection,
where the detector uses a wrong channel $\tilde{W}$.
Self-averaging persists under mismatch,
and the mismatched exponents are characterized by the same
minimization formulas as the matched case,
with $\tilde{\lambda}(Q_{XY},R)$, $\tilde{\Delta}(Q_Y,R)$ replacing their matched counterparts
in the constraints while the true divergences $D_{\mbox{\tiny m}}(Q_Y)$, $D_{\mbox{\tiny c}}(Q_{XY})$
appear in the costs.
The key structural finding is that for $R\geq I(X;Y)$,
no tradeoff interval exists regardless of $\tilde{W}$,
but for $R<R_{\mbox{\tiny c}}$ the tradeoff interval exists
with width depending on the severity of the mismatch.
The mismatched critical rate satisfies
$I_{\rm GMI}^{(1)}\leq R_{\mbox{\tiny c}}\leq I(X;Y)$,
and the natural operating threshold shifts from $\tau=0$ to
$\tau^*=D(P_Y\|\tilde{P}_Y)-D(W\|\tilde{W}|P_X)\leq 0$.
The magnitude of this shift equals the mismatched decoding capacity loss:
$|\tau^*|=I(X;Y)-I_{\rm GMI}^{(1)}$.

\bibliographystyle{IEEEtran}

\end{document}